\pdfoutput=1 % arXiv: force pdfLaTeX
\documentclass[pdflatex,sn-aps]{sn-jnl}

\usepackage{amsmath}
\usepackage{amssymb}
\usepackage{graphicx}
\usepackage{bm}
\usepackage{microtype}
\usepackage{array}% Flattersatz-Spalten in Tabellen (>{\raggedright...})

\theoremstyle{thmstyleone}
\newtheorem{theorem}{Theorem}
\newtheorem{proposition}[theorem]{Proposition}
\newtheorem{lemma}[theorem]{Lemma}

\theoremstyle{thmstyletwo}

\theoremstyle{thmstylethree}
\newtheorem{remark}[theorem]{Remark}

\newcommand{\EL}[1]{\mathcal{E}_{#1}}

\begin{document}

\title{The Myth of Hamel's Paradox}

\author*{\fnm{Robert} \sur{Singer}}\email{robert.singer@fh-joanneum.at}
\affil{\orgname{FH Joanneum--University of Applied Sciences}, \orgaddress{\city{Graz}, \country{Austria}}}
\affil{Preprint, version 1 (09-2026), DOI: 10.5281/zenodo.22919510}

\abstract{Substituting a velocity constraint into the kinetic energy before forming Lagrange's equations yields, in general, incorrect equations of motion. Recent literature calls this failure ``Hamel's paradox,'' a term introduced in 2010. We show that there is no paradox. The inadmissibility of the substitution was recognized in the 1890s; C.~Neumann (1899) named the operation the \textit{illegitimate form} of the kinetic energy, and Hadamard (1895), Chaplygin (1897), Appell (1899) and Hamel (1904) established its limits of validity. Hamel's textbook of 1949, the nominal source of the paradox, states the prohibition and uses Neumann's term; the result remained in print in the specialist literature and in geometric mechanics. The modern debate cites none of this record; we collect it from the primary sources and measure the debate's claims against it. A geometric analysis identifies the substituted function as the restriction of the mass metric to the constraint distribution. If the distribution is integrable, the restriction is the induced metric of an integral manifold, and the substitution is legitimate. If not, the correctly embedded Boltzmann--Hamel equations differ from those of the restricted energy by a single residue, $\mathrm{Res}_\alpha=c^{s}_{\alpha\beta}\,\omega^\beta P_s$, relative to the complement of the constraint distribution that the shortcut selects. The residue vanishes exactly when the symmetric part of its coefficient vanishes. The two classical admissibility mechanisms, decoupling of the mass metric and vanishing bracket components, are sufficient for this and, for a single constraint, also necessary; for several constraints they are not, which corrects a necessity claim in Hamel's paper of 1904. The fundamental equation of constrained motion---the framework of the debate---is not at issue; what the paper questions is only what the example is taken to show about the foundations.}

\keywords{nonholonomic systems, illegitimate form of the kinetic energy, Hamel's paradox, virtual displacements, transpositional relations, history of mechanics}

\pacs[MSC Classification]{70F25, 70H03, 70G45, 70-03}

\maketitle

\section{Introduction}\label{sec:intro}

Let a mechanical system with kinetic energy $T(q,\dot q)$ be subject to linear velocity constraints. If one substitutes the constraints into $T$ and then forms Lagrange's equations from the reduced function as if the system were unconstrained, the resulting equations of motion are generally wrong. Recent work calls this failure \emph{Hamel's paradox} \cite{udwadia:2010}. Udwadia and Wanichanon introduced the term, presenting it as a paradox posed by Hamel in his 1949 book and stating that Hamel leaves open both why the substitution fails and under which circumstances it does. Chen \cite{chen:2013} and Wanichanon and Cho \cite{wanichanon:2024} adopted the term.

The first purpose of this paper is to point out that there is no paradox. The inadmissibility of the substitution was recognized in the 1890s \cite{vierkandt:1892a,chaplygin:2002,korteweg:1900}, and C.~Neumann gave the operation its classical name, the \emph{illegitimate form} of the kinetic energy \cite{neumann:1899}. Hadamard \cite{hadamard:1895}, Appell \cite{appell:1899} and Hamel \cite{hamel:1904a} established its limits of validity; the Russian literature studies these limits to this day as the Hadamard--Hamel problem \cite{borisov:2015}. Hamel's textbook of 1949, the nominal source of the paradox, states the prohibition in the very example from which the paradox is taken and uses Neumann's term \cite{hamel:1949}. The result then remained in print in the specialist literature \cite{neimark:1972,rosenberg:1977,papastavridis:2002} and in geometric mechanics \cite{bloch:1996,bloch:2005} (Tables~\ref{tab:classical} and~\ref{tab:record}), and its variational counterpart was identified in the same period \cite{flannery:2005}.

The paper that introduced the paradox cites none of this record apart from Hamel's textbook, and the papers that adopted the name took it, together with its historical claims, from that single source. Their questions, and in part their results, are those of the classical literature. In this sense ``Hamel's paradox'' is a myth in the plain meaning of the word: an account of the origin of a result that does not match the record. It attributes to Hamel a question he had answered and presents as paradoxical an operation whose limits of validity had been known for a century.

The second purpose is to give the classical result a form in which it can be stated, proved and taught in a few lines. The substituted function is the restriction of the mass metric to the constraint distribution $D$. If $D$ is integrable, the restriction is the induced metric of an integral manifold, and the substitution is legitimate when it is carried out as the complete passage to that manifold (Proposition~\ref{prop:leaf}). If $D$ is not integrable, the restriction is the constrained Lagrangian of geometric mechanics \cite{bloch:1996,bloch:2005}: a well-defined quadratic form on $D$, which, however, does not determine the dynamics by itself. In the frame calculus of the Boltzmann--Hamel equations, the correctly embedded equations and those formed from the restricted energy differ by one term, the residue $\mathrm{Res}_\alpha=c^{s}_{\alpha\beta}\,\omega^\beta P_s$ built from the structure functions of the adapted frame and the momenta conjugate to the suppressed directions (Proposition~\ref{prop:residue}). The coordinate-level shortcut adds a second, independent term; Proposition~\ref{prop:decomposition} separates the two exactly. The residue vanishes exactly when the symmetric part of its coefficient does. Hadamard's and Hamel's conditions are the two mechanisms that make it vanish direction by direction. They are sufficient; for a single constraint they are also necessary; for several constraints they are not, because the contributions of different constrained directions can cancel (Section~\ref{sec:geo-residue}). Korteweg's exception and Hamel's rule for the frame form follow from the same formula. What is prohibited is thus not the restriction of the energy but the pretense that a coordinate calculus can process it.

The paper addresses the modern debate as a clarification and a complement. The framework in which the debate imposes the constraints, the fundamental equation of constrained motion \cite{udwadia:1996,udwadia:2002}, is not at issue, and its resolution of the example is correct: it is Neumann's rule of 1899. Section~\ref{sec:ill-rediscovery} examines what the 2010 paper infers from the example beyond that rule: that the explanation calls for Gauss's viewpoint, that it dispenses with multipliers, and that it requires singular mass matrices.

The paper is organized as follows. Section~\ref{sec:illegitimate} fixes the operation and its variants, reproduces Hamel's knife-edge example from the source, collects the classical admissibility conditions and documents the record from 1892 to 2015 (Tables~\ref{tab:classical} and~\ref{tab:record}). Section~\ref{sec:ill-rediscovery} measures the claims of the modern debate against this record. Section~\ref{sec:resolution} contains the geometric analysis: the residue and the criterion, the decomposition of the coordinate shortcut and its relation to the constrained Lagrangian, and a comparison with the neighboring lines of work. Section~\ref{sec:discussion} delimits the analysis and concludes. The appendices collect the notation, the proof of the frame identity, Hamel's conditions of 1904 in modern notation, a remark on Chen's criterion and the examples for the criterion.

\section{The operation and the published record}\label{sec:illegitimate}

\subsection{The operation and the two routes}\label{sec:ill-operation}

Let a system be described by coordinates $q^1,\dots,q^n$, kinetic energy $T(q,\dot q,t)$, and generalized applied forces $Q_i$, and let it be subject to $m<n$ Pfaffian constraints
\begin{equation}\label{eq:pfaff}
	B_{r i}(q,t)\,\dot q^i + B_r(q,t) = 0 ,
	\qquad r=1,\dots,m ,
\end{equation}
with the corresponding conditions on the virtual displacements,
\begin{equation}\label{eq:pfaff-virtuell}
	B_{r i}(q,t)\,\delta q^i = 0 .
\end{equation}

The d'Alembert--Lagrange principle states that
\begin{equation}\label{eq:dalembert}
	\bigl(\EL{i}(T) - Q_i\bigr)\,\delta q^i = 0 ,
	\qquad
	\EL{i} := \frac{\mathrm d}{\mathrm dt}\frac{\partial}{\partial\dot q^i} - \frac{\partial}{\partial q^i} ,
\end{equation}
for all $\delta q$ satisfying \eqref{eq:pfaff-virtuell}. Three standing assumptions hold throughout: the constraint matrix $B_{ri}$ has full rank $m$; the kinetic energy is a positive definite quadratic form in the velocities (positive definite mass metric); and the constraints are ideal in d'Alembert's sense---the reaction forces do no work under admissible virtual displacements, which is the content of \eqref{eq:dalembert}. Two routes lead from \eqref{eq:dalembert} to equations of motion.

\emph{Route A (adjoining).} The constraints are adjoined after the variation: by multipliers, $\EL{i}(T) = Q_i + \lambda^r B_{r i}$, together with \eqref{eq:pfaff}; or, equivalently, by contracting the balance $\EL{i}(T)-Q_i$ with a basis of the admissible directions (Maggi's equations \cite{maggi:1901}); or in constraint-adapted quasi-velocities (the Boltzmann--Hamel equations \cite{boltzmann:1902,hamel:1904a}).

\emph{Route B (substitution).} The constraints are substituted into the kinetic energy before the equations of motion are formed. The shortcut occurs in the literature in three variants; their errors differ, and we therefore keep them apart. \emph{(B1) Full elimination:} the constraints \eqref{eq:pfaff} are solved for $m$ velocities, which are eliminated from $T$, producing a function
\begin{equation}\label{eq:Tstar}
	T^{*}\bigl(q,\dot q_{\mathrm f},t\bigr)
	:= T\bigl(q,\dot q(q,\dot q_{\mathrm f},t),t\bigr),
\end{equation}
where $\dot q_{\mathrm f}$ collects the remaining (free) velocities; Lagrange's equations are then formed from $T^{*}$ for the free coordinates as if the system were unconstrained in them; the eliminated coordinates, which may survive as arguments of $T^{*}$, are held fixed. In Neumann's variant of (B1) \cite[p.~436]{neumann:1899} the Lagrange equation of an eliminated coordinate is formed as well. \emph{(B2) Partial substitution:} the constraint combinations are set to zero wherever they appear in $T$, while all coordinates---and, where used, the multipliers---are retained; this is the variant of Hamel's textbook example (Section~\ref{sec:ill-schneide}). \emph{(B3) Premature substitution in quasi-velocities:} the constrained quasi-velocities are set to their constraint values in the frame kinetic energy before the momenta are computed (Sections~\ref{sec:ill-conditions} and~\ref{sec:geo-residue}). ``Route B'' without qualification refers to the family. Figure~\ref{fig:square} shows the two routes as a square that does not commute; Section~\ref{sec:resolution} measures the defect.

\begin{figure}[t]
	\centering
	\includegraphics[width=\textwidth]{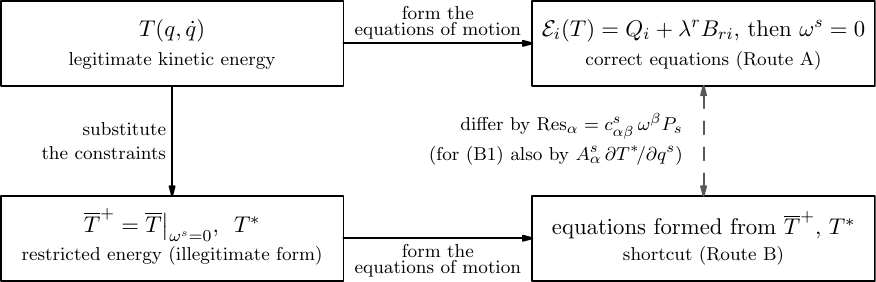}
	\caption{The substitution question as a square that does not commute. Route A forms the equations of motion from the legitimate energy $T$ and imposes the constraints afterwards; Route B substitutes the constraints into $T$ first and forms the equations of the restricted energy $\overline T^{+}$ (in coordinates $T^{*}$). The two results differ by the residue $\mathrm{Res}_\alpha$ of Proposition~\ref{prop:residue}; for full elimination (B1) the coordinate-slot term of Proposition~\ref{prop:decomposition} is added. The square commutes exactly when the constraint distribution is integrable and the embedding is complete, the passage to the integral manifold of Proposition~\ref{prop:leaf}.}\label{fig:square}
\end{figure}

Following Neumann, $T$ is the \emph{legitimate} expression of the kinetic energy and $T^{*}$ its \emph{illegitimate} form \cite{neumann:1899}.\footnote{Neumann, discussing a point mass whose velocity component is controlled by an unknown force: the expression obtained from the Newtonian equations is ``der eigentlich legitime Ausdruck der lebendigen Kraft'' (the properly legitimate expression of the \emph{vis viva}), whereas the substituted expression is ``als durchaus illegitim zur\"uckzuweisen'' (to be rejected as outright illegitimate) \cite[p.~437]{neumann:1899}; the general rule---set up Lagrange's equations ignoring the constraint equations entirely, and take those into account only \emph{a posteriori}---follows there immediately.} Route A is correct; Route B is, in general, not. The classical question, in Hamel's words, is ``wann darf man bei nicht-holonomen Bedingungsgleichungen die Lagrangeschen Gleichungen und die `illegitime Form' der lebendigen Kraft benutzen?''\ \cite[\S 8]{hamel:1904a}: under which circumstances Route B reproduces Route A.

\subsection{Hamel's knife edge}\label{sec:ill-schneide}

Hamel's example is the knife edge (ice-skate blade) moving in the plane \cite[No.~229, pp.~465--466]{hamel:1949}: contact point $B=(x,y)$, blade direction $(\cos\vartheta,\sin\vartheta)$, center of mass at distance $s$ from $B$ along the blade, mass $m$, moment of inertia $I_B=I_S+ms^{2}$ about $B$, with $I_S$ the moment of inertia about the center of mass (Figure~\ref{fig:knife-edge}). The kinetic energy and the no-side-slip constraint are
\begin{equation}\label{eq:schneide-T}
	T = \tfrac12 m\bigl(\dot x^2+\dot y^2\bigr)
	+ m s\,\dot\vartheta\bigl(\dot y\cos\vartheta-\dot x\sin\vartheta\bigr)
	+ \tfrac12 I_B\dot\vartheta^2 ,
	\qquad
	\dot y\cos\vartheta - \dot x\sin\vartheta = 0 .
\end{equation}

\begin{figure}[t]
	\centering
	\includegraphics[scale=0.6]{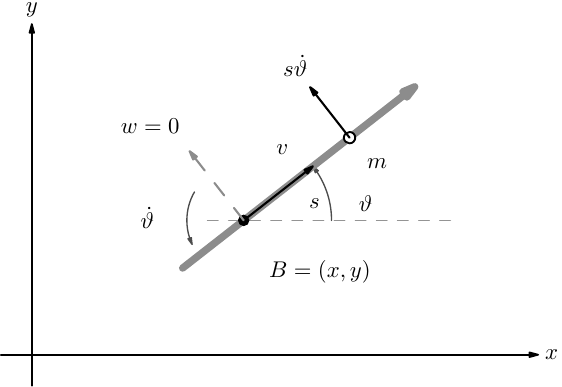}
	\caption{Hamel's knife edge: contact point $B=(x,y)$, blade direction at angle $\vartheta$ to the $x$-axis, center of mass (mass $m$) at distance $s$ from $B$ along the blade. The velocity at $B$ is resolved into the along-blade component $v$ and the transverse component $w=\dot y\cos\vartheta-\dot x\sin\vartheta$; the no-side-slip constraint suppresses the latter, $w=0$ (dashed). The rotation $\dot\vartheta$ gives the center of mass the transverse velocity $s\dot\vartheta$ that the contact point lacks; the offset $s$ thus couples rotation and suppressed direction through the constrained momentum $P_w=ms\dot\vartheta$ of \eqref{eq:schneide-Pw}---the quantity that the illegitimate form discards.}\label{fig:knife-edge}
\end{figure}

Route A (multipliers $\lambda$ for the constraint row $(-\sin\vartheta,\cos\vartheta,0)$, applied force components $X,Y$ and moment $M$) gives
\begin{equation}\label{eq:schneide-korrekt-abc}
	\begin{aligned}
		\frac{\mathrm d}{\mathrm dt}\bigl(m\dot x - m s\sin\vartheta\,\dot\vartheta\bigr) & = X - \lambda\sin\vartheta , \\
		\frac{\mathrm d}{\mathrm dt}\bigl(m\dot y + m s\cos\vartheta\,\dot\vartheta\bigr) & = Y + \lambda\cos\vartheta , \\
		\frac{\mathrm d}{\mathrm dt}\Bigl[m s\bigl(\dot y\cos\vartheta-\dot x\sin\vartheta\bigr)+I_B\dot\vartheta\Bigr]
		+ m s\,\dot\vartheta\bigl(\dot y\sin\vartheta+\dot x\cos\vartheta\bigr)           & = M .
	\end{aligned}
\end{equation}

Using the constraint \emph{after} the equations are formed (which is permitted), and writing $v:=\dot x\cos\vartheta+\dot y\sin\vartheta$ for the speed along the blade, the along-blade combination and the third equation become
\begin{equation}\label{eq:schneide-korrekt}
	m\bigl(\ddot x\cos\vartheta+\ddot y\sin\vartheta\bigr) - m s\,\dot\vartheta^2 = Z ,
	\qquad
	I_B\ddot\vartheta + m s\,\dot\vartheta\,v = M ,
\end{equation}
with $Z:=X\cos\vartheta+Y\sin\vartheta$.\footnote{Hamel's passage contains several misprints: the defective cross-reference discussed below, the momentum $p_\vartheta$ printed as $\partial T/\partial\vartheta$ on p.~465, and a missing dot on $\vartheta$ in the intermediate form of the third equation on p.~466; the final form on p.~467 and the reproduction in \cite[eqs.~(1.8)--(1.12)]{udwadia:2010} carry the dot.} Route B---here Hamel's variant (B2): the constraint combination is set to zero inside $T$, coordinates and multiplier retained---replaces $T$ by
\begin{equation}\label{eq:schneide-Tstar}
	T^{*} = \tfrac12 m\bigl(\dot x^2+\dot y^2\bigr) + \tfrac12 I_B\dot\vartheta^2
\end{equation}
and produces instead
\begin{equation}\label{eq:schneide-falsch}
	m\bigl(\ddot x\cos\vartheta+\ddot y\sin\vartheta\bigr) = Z ,
	\qquad
	I_B\ddot\vartheta = M :
\end{equation}
the centrifugal term $-m s\dot\vartheta^2$ of the offset center of mass and the gyroscopic coupling $m s\dot\vartheta v$ are lost. This is the computation reproduced in \cite{udwadia:2010,chen:2013}.

The structure of the loss is best seen in quasi-velocities. With $v$ as above and the transverse velocity $w:=\dot y\cos\vartheta-\dot x\sin\vartheta$ (the constrained quantity, $w=0$),
\begin{equation}\label{eq:schneide-quasi}
	T = \tfrac12 m\bigl(v^2+w^2\bigr) + m s\,\dot\vartheta\,w + \tfrac12 I_B\dot\vartheta^2 ,
\end{equation}
so $T$ contains the constrained velocity $w$ once quadratically and once \emph{linearly}. The quadratic term is harmless; the linear term carries the momentum conjugate to the suppressed direction,
\begin{equation}\label{eq:schneide-Pw}
	P_w := \frac{\partial T}{\partial w}\bigg|_{w=0} = m s\,\dot\vartheta \neq 0 ,
\end{equation}
and it is exactly this quantity that multiplies the lost terms in \eqref{eq:schneide-korrekt}.

For $s=0$ the linear term vanishes and variant (B2) discards nothing: the equations come out correct, an admissible case of the energy-splitting kind in Hamel's classification (Section~\ref{sec:ill-conditions}). Hadamard's structural mechanism is not available here, since for a single constraint his criterion reduces to integrability \cite[No.~7]{hadamard:1895}. Full elimination (B1) is not rescued by $s=0$: eliminating $\dot y=\dot x\tan\vartheta$ gives $T^{*}=\tfrac12 m\sec^{2}\!\vartheta\,\dot x^{2}+\tfrac12 I_B\dot\vartheta^{2}$, whose force-free Lagrange equations $\frac{\mathrm d}{\mathrm dt}\bigl(m\sec^{2}\!\vartheta\,\dot x\bigr)=0$ and $I_B\ddot\vartheta-m\sec^{2}\!\vartheta\tan\vartheta\,\dot x^{2}=0$ disagree with the correct equations $m\dot v=0$, $I_B\ddot\vartheta=0$. The two variants fail for different reasons; the discrepancy of (B1) is quantified in Section~\ref{sec:geo-decomposition}.

Hamel states the prohibition at the example itself: one may use the constraint after the equations of motion are set up, ``but we would not have been permitted to work beforehand with the expression of the kinetic energy simplified by the constraint equation [\dots]. That would evidently have given false equations''\footnote{``wir h\"atten aber nicht vorher mit dem durch die Bedingungsgleichung vereinfachten Ausdruck der kinetischen Energie [\dots] arbeiten d\"urfen. Das h\"atte ersichtlich falsche Gleichungen gegeben.''} \cite[p.~466]{hamel:1949}. The explanation appears a few pages later in the same chapter \cite[Nos.~237, 240]{hamel:1949} (Section~\ref{sec:ill-conditions}, item~5).\footnote{The cross-reference appended there, ``(vgl.\ Kap.~VI, \S~3)'', is defective: the section so numbered (pp.~295--299) treats first integrals of the canonical equations and Poisson brackets, and no passage of the book unambiguously fits the citation; most probably the reference is a misprint.}

\subsection{The classical admissibility conditions}\label{sec:ill-conditions}

The classical literature did not stop at the prohibition; it determined why the substitution fails and under which conditions it may nevertheless be used.

\begin{enumerate}
	\item \emph{Vierkandt's localization.} The derivation of Lagrange's equations from the fundamental equation is tied to ``a quite specific condition'' (``eine ganz bestimmte Bedingung''). Vierkandt locates it in two places: in his equation (20), the integration by parts of the acceleration terms,
	      \begin{equation*}
		      \sum\ddot x\,\delta x
		      =\frac{\mathrm d}{\mathrm dt}\sum\dot x\,\delta x-\sum\dot x\,\delta\dot x ,
	      \end{equation*}
	      which presupposes the relation $\mathrm d\delta x/\mathrm dt=\delta\,\mathrm dx/\mathrm dt$ (his (a)); and in the relations $\partial\dot x/\partial\dot q_i=\partial x/\partial q_i$ (his (b)). Both presuppose that $\dot x,\dot y,\dot z$ are total time derivatives, which they cease to be once the constraint equations have been used in forming $T$. The substituted $T^{*}$ thus violates the premise of the derivation, and the simplification is ``in general inadmissible'' (``im allgemeinen unstatthaft'') \cite[\S4]{vierkandt:1892a}.
	\item \emph{Hadamard's conditions, Appell's criterion, and Chaplygin's statement.} Hadamard \cite[Nos.~5--8]{hadamard:1895} writes $p$ linear constraints on $m+p$ parameters in solved form and asks which linear combinations of them may be inserted into $T$ before the differentiations:
	      \begin{equation*}
		      \dot q_k=\sum_{h=1}^{m}a^{k}_{h}\,\dot q_h\quad(k=m{+}1,\dots,m{+}p),
		      \qquad
		      \mathcal C=\sum_{k}\lambda_k\Bigl(\sum_{h}a^{k}_{h}\,\dot q_h-\dot q_k\Bigr).
	      \end{equation*}

	      Using the constraint equations in $T$ means, algebraically, replacing $T$ by $T+\mathcal C$, where the $\lambda_k$ are functions of the $q$ and, in general, of the $\dot q$; since $\mathcal C$ vanishes on the constrained velocities, the two functions agree there, but each Lagrange expression changes by the value it takes for $\mathcal C$ alone, and the replacement is legitimate only if that value vanishes in the final result \cite[No.~6]{hadamard:1895}. For multipliers $\lambda_k$ independent of the velocities the added term in the equation of $q_i$ is $\sum_h\dot q_h\,P_{i,h}$, and it vanishes identically exactly when
	      \begin{equation*}
		      P_{i,h}:=\sum_{k}\lambda_k\Bigl[\frac{\partial a^{k}_{i}}{\partial q_h}-\frac{\partial a^{k}_{h}}{\partial q_i}
			      -\sum_{l}\Bigl(a^{l}_{i}\frac{\partial a^{k}_{h}}{\partial q_l}-a^{l}_{h}\frac{\partial a^{k}_{i}}{\partial q_l}\Bigr)\Bigr]=0 ,
		      \qquad P_{i,h}=-P_{h,i}
	      \end{equation*}
	      \cite[No.~7, eqs.~(9)--(10)]{hadamard:1895}: $m(m-1)/2$ linear conditions on the $\lambda_k$ with coefficients from the constraints alone, so that at least $p-m(m-1)/2$ substitutable combinations always exist and $T$ can always be reduced to $m(m+1)/2$ velocities. All constraints are substitutable if and only if the system is integrable \cite[No.~8]{hadamard:1895}. In the frame language of Appendix~\ref{app:hamel-par8}, $P_{i,h}=-\lambda_k\,c^{k}_{ih}$ for the coordinate-adapted frame: a combination is substitutable exactly when its form is exact at each point of the constraint plane \cite[Nos.~10--12]{hadamard:1895}. Rolling examples: two surfaces rolling on one another ($m=3$, $p=2$) admit no substitutable combination; rolling without pivoting ($m=2$, $p=3$) admits exactly the two non-slip equations \cite[Nos.~13--14]{hadamard:1895}; see \cite{borisov:2015} for the modern development.

	      Appell \cite[No.~23]{appell:1899} states the criterion for a \emph{given} system. With the admissible velocities $x'=\sum_\nu a_\nu\,q'_\nu$ (likewise $y'$, $z'$), where the coefficients depend on the retained parameters $q_1,\dots,q_n$ alone, the correct equation for $q_1$ and its deviation from Lagrange's form are
	      \begin{equation*}
		      \frac{\mathrm d}{\mathrm dt}\frac{\partial T}{\partial q_1'}-R_1=Q_1 ,
		      \qquad
		      R_1-\frac{\partial T}{\partial q_1}
		      =\sum m\Bigl[x'\sum_{\nu}\Bigl(\frac{\partial a_1}{\partial q_\nu}-\frac{\partial a_\nu}{\partial q_1}\Bigr)q'_\nu+\dotsb\Bigr]
	      \end{equation*}
	      \cite[eqs.~(45)--(47)]{appell:1899}; the deviation is a quadratic form in the velocities, and Lagrange's equation applies to $q_1$ if and only if this form vanishes identically. It vanishes for all parameters when the displacements are exact differentials, and for $q_1$ alone when $\partial a_1/\partial q_\nu=\partial a_\nu/\partial q_1$ for all $\nu$ (likewise $b$, $c$), in which case $q_1$ is a true coordinate once the remaining parameters are known functions of time \cite[eq.~(48)]{appell:1899}. Hadamard's criterion concerns the combinations substitutable for every kinetic energy and is a property of the constraints alone; Appell's concerns one system at a time.

	      Chaplygin, in the report of 1895 that exposed Lindel\"of's error,\footnote{Lindel\"of formed the kinetic energy with the rolling conditions taken into account and then wrote Lagrange's equations for it, ``not in the form (7), as he ought to do, but in the form (8); this was his mistake'' \cite[p.~121]{chaplygin:2002}.} writes the equations of motion as Lagrange's operator applied to the substituted function, plus correction terms built from the momenta of the eliminated velocities and the curl of the constraint coefficients---the form \eqref{eq:decomposition} without its first term. He states that these equations reduce to Lagrange's form ``in the only case'' that the coefficients satisfy the integrability relations; the exceptional case of vanishing momenta of the eliminated velocities is noted and set aside \cite[pp.~120--121]{chaplygin:2002}. That is Hadamard's criterion together with the energy-splitting case of Hamel's conditions below.
	\item \emph{Hamel's conditions.} Hamel \cite[\S8]{hamel:1904a} states when Lagrange's equations formed from the illegitimate $T$ are valid for the free coordinates: the constrained quasi-velocities must admit a splitting such that no products of free and constrained velocities occur in $T$ and certain structure coefficients vanish. In the extreme cases, either the admissible infinitesimal transformations commute (the constraints are then effectively integrable) or the energy splits into a free and a constrained part. Hamel states the conditions as necessary and sufficient. The sufficiency is correct, and the necessity holds for a single constraint; for several constraints it fails in general, because his argument requires the unsymmetrized coefficient to vanish (Section~\ref{sec:geo-residue}). The conditions contain Hadamard's as the special case of coordinate velocities.
	\item \emph{Korteweg's exception.} For infinitesimal motions about equilibrium---small oscillations---the illegitimate form may be used: the discarded terms are of higher order \cite{korteweg:1900,hamel:1904a}. St\"ackel's survey records this as the one case in which the transformed expression $T^{*}$ always leads to correct results, and notes, more generally, that the illegitimate equations are in general (``im allgemeinen'') incompatible with the legitimate ones, while exceptionally (``ausnahmsweise'') a single equation may come out right \cite[No.~31]{stackel:1908}.
	\item \emph{Hamel's rule for the frame form.} In the frame form one must not set the constrained quasi-velocities to zero in the frame kinetic energy before the momenta are computed (``das gef\"ahrliche $T^{+}$,'' the dangerous $T^{+}$), but terms \emph{quadratic} in them may be dropped in advance \cite[No.~237]{hamel:1949}. Equivalently, only the terms \emph{linear} in the constrained velocities---the constrained momenta \eqref{eq:schneide-Pw}---are dynamically active. Papastavridis states the corresponding conditions in general quasi-velocity form \cite[\S3.5, p.~423]{papastavridis:2002}.
\end{enumerate}

\subsection{The record 1892--2015}\label{sec:ill-record}

The inadmissibility of the substitution became known through incorrect results in the theory of rolling bodies (Table~\ref{tab:classical}). Vierkandt \cite{vierkandt:1892a,vierkandt:1892b} warns against deriving Lagrange's equations from a kinetic energy simplified by the rolling conditions and locates the error in the derivation of the equations (Section~\ref{sec:ill-conditions}, item~1). Korteweg \cite{korteweg:1900} shows that a ``quite common incorrect approach'' to rolling problems rests on this substitution, naming several authors and \S~452 of the first edition of Appell's \emph{Trait\'e}. Appell corrected the treatise and pointed out the error in a separate note \cite[No.~31, fn.~447]{stackel:1908}. His booklet of 1899 states the rule: the Lagrange equations ``cannot be applied when one takes those exceptional constraints into account in order to modify the expression for the \emph{vis viva} $T$,'' a difficulty ``pointed out and studied'' by Neumann, Vierkandt, Hadamard, Carvallo and Korteweg \cite[No.~20]{appell:1899}.\footnote{Quoted from Delphenich's translation; the French text of the booklet was not available to the author.}

Hamel \cite[\S8]{hamel:1904a} asks when Lagrange's equations and the ``illegitimate form'' of the \emph{vis viva} may be used for nonholonomic constraints, credits Neumann with the term, generalizes Hadamard's conditions to quasi-velocities and derives Korteweg's exception for infinitesimal motions. Boltzmann, whose paper of 1902 gives corrected equations for nonholonomic coordinates \cite{boltzmann:1902}, carries the matter into his lectures of 1904. They note that Lagrange's equations require essential modification for nonholonomic coordinates, compute the additional terms and show by a worked example that the ordinary equations ``would yield incorrect equations of motion'' \cite[\S4, p.~16, fn.; \S\S27--28]{boltzmann:1904}.\footnote{``[Die Lagrangeschen Gleichungen in ihrer gew\"ohnlichen Form aber] w\"urden falsche Bewegungsgleichungen liefern''; ``eine Ungereimtheit'' \cite[\S28, p.~115]{boltzmann:1904}.} The encyclopedia surveys of Voss \cite[No.~38]{voss:1901} and St\"ackel \cite[No.~31]{stackel:1908} codify the errors, the terminology, the corrections and the exceptions. A modern account of the period is given in \cite{borisov:2016}.

Between Hamel's textbook and the modern debate the result remained in print (Table~\ref{tab:record}). Neimark and Fufaev \cite{neimark:1972} give the theory of linear nonholonomic systems together with its history. They name Lindel\"of's error without describing it (pp.~1, 64) and mark the substituted function by an asterisk (p.~106). Chaplygin's equations appear as Lagrange's operator on $T^{*}$ plus a correction sum built from the unconstrained momenta; they reduce to Lagrange's form when the constraints are integrable (p.~108). Section~11 states the condition for a single equation to take Lagrange's form with $L^{*}$ (pp.~191--192).

Rosenberg's Chapter~14, ``Embedding Constraints,'' opens with a section ``A Fallacy'': embedding the velocity constraint into $T$ (his $T^{\dagger}$) may give wrong results. He shows this on the nonholonomic constraint $\dot y=z\dot x$ and contrasts it with the holonomic $\dot y=k\dot x$, for which the same operation succeeds; the correct embedding acts on the virtual displacements, never on $T$ \cite[Ch.~14]{rosenberg:1977}. Ardema \cite[\S6.6]{ardema:2005} restates the rule: substitution into $T$ ``in a holonomic system, and only in a holonomic system.''

Flannery \cite{flannery:2005} shows that the variational form of the substitution, the multiplier-augmented Lagrangian inserted into Hamilton's principle, was still being adopted in the 2000s, identifies why it fails and reaffirms the d'Alembertian route. Papastavridis \cite[\S3.5, p.~423]{papastavridis:2002} gives the admissibility conditions in quasi-velocity form.

In geometric mechanics the substituted function is standard under the name \emph{constrained Lagrangian}. Bloch, Krishnaprasad, Marsden and Murray define it by substituting the constraints into $L$ and attach the curvature of the constraint connection to its equations of motion from the outset \cite[\S2.1]{bloch:1996}. Bloch's monograph states it intrinsically, $L_c(q,\dot q)=L(q,\mathrm{hor}\,\dot q)$, with the curvature correction $\delta L_c=\langle \mathbb FL,\,B(\dot q,\delta q)\rangle$ for horizontal variations \cite[\S5.2, Thm.~5.2.2]{bloch:2015}, and Bloch, Marsden and Zenkov state the doctrine compactly \cite{bloch:2005}.

\begin{table}[!htbp]
	\caption{Classical treatments of the substitution question, 1892--1908 (all sources examined at first hand; see the note on sources and translations).}\label{tab:classical}
	\footnotesize\linespread{1}\selectfont
	\renewcommand{\arraystretch}{1.2}\setlength{\tabcolsep}{8pt}
	\begin{tabular}{@{}>{\raggedright\arraybackslash}p{0.235\textwidth}>{\raggedright\arraybackslash}p{0.205\textwidth}>{\raggedright\arraybackslash}p{0.475\textwidth}@{}}
		\toprule
		Source                                    & Terminology                                          & Treatment                                                                                                                                                                                                                         \\
		\midrule
		Vierkandt 1892 \cite{vierkandt:1892a}     & ``im allgemeinen unstatthaft''                       & warning against Lagrange's equations formed from a kinetic energy simplified by the rolling conditions; the error located in the derivation of the equations (\S4)                                                                \\
		Chaplygin 1895/1897 \cite{chaplygin:2002} & forms (7) and (8)                                    & Lindel\"of's error exposed; equations of motion as Lagrange's operator on the substituted function plus correction terms, reducing to Lagrange's form only in the integrable case (pp.~120--121)                                  \\
		Hadamard 1895 \cite{hadamard:1895}        & ---                                                  & conditions on the constraint combinations substitutable for every kinetic energy (Nos.~5--8); all substitutable iff integrable (No.~8)                                                                                            \\
		Appell 1899 \cite{appell:1899}            & ---                                                  & rule stated (No.~20); criterion for a given system, the deviation from Lagrange's form as a quadratic form in the velocities (No.~23, eqs.~(45)--(48)); \emph{Trait\'e} corrected                                                 \\
		Neumann 1899 \cite{neumann:1899}          & legitime/illegitime Form der lebendigen Kraft        & term introduced (p.~437); sign reversal under incomplete substitution (p.~436); rule: set up Lagrange's equations ignoring the constraints and take them into account afterwards                                                  \\
		Korteweg 1900 \cite{korteweg:1900}        & ``ziemlich verbreitete unrichtige Behandlungsweise'' & the substitution identified as the common error in rolling problems, with a list of authors (fn.~2); exception for small oscillations                                                                                             \\
		Voss 1901 \cite{voss:1901}                & illegitime Form (after Neumann)                      & encyclopedia codification: the correction term to Lagrange's form, the term, the sources (No.~38)                                                                                                                                 \\
		Hamel 1904 \cite{hamel:1904a}             & illegitime Form (credited to Neumann)                & conditions for the validity of Lagrange's equations of the retained coordinates, in quasi-velocities (\S8); Korteweg's exception derived                                                                                          \\
		Boltzmann 1904 \cite{boltzmann:1904}      & ---                                                  & lectures: Lagrange's equations require modification for nonholonomic coordinates (\S4); additional terms computed (\S27); worked example with ``falsche Bewegungsgleichungen'' (\S28)                                             \\
		St\"ackel 1908 \cite{stackel:1908}        & illegitime/legitime Gleichungen; $T^{*}$             & encyclopedia codification: the rolling-body errors, Appell's correction, Chaplygin and Korteweg; small oscillations as the one always-safe case; illegitimate equations in general incompatible with the legitimate ones (No.~31) \\
		\botrule
	\end{tabular}
\end{table}

\begin{table}[!htbp]
	\caption{Explicit treatments of the substitution question between 1949 and 2015 (sources examined at first hand).}\label{tab:record}
	\footnotesize\linespread{1}\selectfont
	\renewcommand{\arraystretch}{1.2}\setlength{\tabcolsep}{8pt}
	\begin{tabular}{@{}>{\raggedright\arraybackslash}p{0.235\textwidth}>{\raggedright\arraybackslash}p{0.205\textwidth}>{\raggedright\arraybackslash}p{0.475\textwidth}@{}}
		\toprule
		Source                                                       & Terminology                                     & Treatment                                                                                                                                                                                                                                                                                          \\
		\midrule
		Hamel 1949 \cite{hamel:1949}                                 & illegitime Form; das gef\"ahrliche $T^{+}$      & explicit prohibition in the knife-edge example; separate warning (No.~237); worked counterexample with the term (No.~240)                                                                                                                                                                          \\
		Neimark--Fufaev 1972 \cite{neimark:1972}                     & asterisk convention: $T^{*}$, $L^{*}$           & Chaplygin's equations as Lagrange's operator on $T^{*}$ plus correction terms, vanishing identically for every kinetic energy iff the constraints are integrable (Ch.~III, \S3); condition for a single Lagrange equation with $L^{*}$ (\S11); Lindel\"of's error named, not described (pp.~1, 64) \\
		Rosenberg 1977 \cite{rosenberg:1977}                         & a fallacy; $T^{\dagger}$                        & counterexample pair (nonholonomic vs.\ holonomic); correct embedding via virtual displacements                                                                                                                                                                                                     \\
		Bloch--Krishnaprasad--Marsden--Murray 1996 \cite{bloch:1996} & constrained Lagrangian $L_c$                    & $L_c$ defined by substituting the constraints into $L$; equations of motion with the curvature term of the constraint connection attached (eqs.~(2.1.6)--(2.1.7)); substitution without the curvature term correct in the integrable case (\S2.1)                                                  \\
		Papastavridis 2002 \cite{papastavridis:2002}                 & constrained kinetic energy                      & admissibility conditions in quasi-velocities (p.~423); caution against the claim, found in a multibody text of 1992, that independent coordinates imply arbitrary virtual displacements (p.~419)                                                                                                   \\
		Bloch--Marsden--Zenkov 2005 \cite{bloch:2005}                & constrained Lagrangian                          & substitution legitimate only together with correction terms                                                                                                                                                                                                                                        \\
		Ardema 2005 \cite{ardema:2005}                               & embedding                                       & substitution into $T$ only for holonomic systems; otherwise embedding via virtual displacements                                                                                                                                                                                                    \\
		Flannery 2005 \cite{flannery:2005}                           & adjoined vs.\ embedded                          & variational counterpart: the multiplier-augmented Lagrangian must not be substituted into Hamilton's principle; d'Alembertian route reaffirmed                                                                                                                                                     \\
		Bloch et al.\ 2003/2015 \cite{bloch:2015}                    & constrained Lagrangian $L_c=L\circ\mathrm{hor}$ & substitution always paired with the curvature correction (Thm.~5.2.2); classical literature cited; admissibility question not posed                                                                                                                                                                \\
		\botrule
	\end{tabular}
\end{table}

\section{The modern debate measured against the record}\label{sec:ill-rediscovery}

\subsection{The claims of 2010}\label{sec:deb-claims}

Udwadia and Wanichanon \cite{udwadia:2010} present the knife-edge computation of Section~\ref{sec:ill-schneide} as ``a paradox posed by Hamel in his 1949 book'' and resolve it within the fundamental equation of constrained motion \cite{udwadia:1996,udwadia:2002}. Their bibliography comprises six entries---Hamel's textbook, Gauss (1829), Pars's treatise and three works of the first author's school---and none of the other literature of Sections~\ref{sec:ill-operation}--\ref{sec:ill-record}. Judging by the sources they cite, they were not aware of the historical findings.

The authors state that, while ``Hamel is correct in what he writes,'' he ``leaves the reader unclear as to why'' the substitution fails and ``leaves open the question of identifying those special circumstances'' in which it would fail \cite[p.~1255]{udwadia:2010}. Both statements are incorrect. The ``why'' is stated a few pages after the cited passage in the same chapter, ``vorzeitiger Gebrauch der Bedingungsgleichung'' (premature use of the constraint equation), together with Neumann's term \cite[No.~240]{hamel:1949}. The ``special circumstances'' are the subject of Hadamard's article of 1895 \cite[Nos.~5--14]{hadamard:1895}, of Appell's criterion of 1899 \cite[No.~23]{appell:1899} and of Hamel's \S8 of 1904 (Section~\ref{sec:ill-conditions}). One point must be conceded: the cross-reference Hamel appends at the cited page is defective (Section~\ref{sec:ill-schneide}), and a reader who follows it finds nothing on the subject.

The resolution the paper offers, that the unconstrained system must be conceptualized without the constraints, which are imposed afterwards, is correct: it is Neumann's rule of 1899, Hadamard's premise of 1895 \cite[No.~5]{hadamard:1895} and Appell's statement of the same year \cite[No.~20]{appell:1899}. The mathematical framework in which the constraints are then imposed, the fundamental equation of constrained motion \cite{udwadia:1996,udwadia:2002}, is not in question here; the name and the historical claims are.

\subsection{Three inferences}\label{sec:deb-inferences}

Three further inferences that the paper draws from the example must be examined separately, since they are what makes the example appear to touch the foundations. First, the paper presents the rule as a change of viewpoint, ``from the Lagrangian view of mechanics to the one developed by Gauss'' \cite[p.~1255]{udwadia:2010}, and derives its equation of motion from Gauss's principle of least constraint \cite{gauss:1829,udwadia:1992,udwadia:1996}: with $a:=M^{-1}F$ the acceleration of the unconstrained system, the actual acceleration minimizes the constraint $\tfrac12(\ddot q-a)^{\mathsf T}M(\ddot q-a)$ among all $\ddot q$ with $A\ddot q=b$. For ideal constraints this is d'Alembert's principle in another form, since a convex quadratic function on an affine subspace is minimized exactly where its gradient annihilates the directions of the subspace:
\begin{equation*}
	\ddot q=\operatorname*{arg\,min}_{A\ddot q=b}\ \tfrac12(\ddot q-a)^{\mathsf T}M(\ddot q-a)
	\quad\Longleftrightarrow\quad
	(M\ddot q-F)\cdot\delta q=0\ \text{ for all }\delta q\text{ with }A\,\delta q=0 .
\end{equation*}

Gauss's principle turns the pointwise principle of d'Alembert into a minimum principle at each instant \cite[Ch.~IV, \S8, p.~106]{lanczos:1962}, \cite[\S\S6.2, 6.4]{papastavridis:2002}, in full geometric generality \cite[Thm.~7.1]{lewis:1996}. The authors of the framework themselves call the principles of d'Alembert, Lagrange and Gauss ``all-encompassing'' and exclude a new fundamental principle \cite{udwadia:1992}, and the 2010 paper notes that ``both views ultimately rest on d'Alembert's principle'' \cite[p.~1255]{udwadia:2010}.

The change of viewpoint therefore cannot bear on the question. Both principles take the same input---the mass matrix and the forces of the unconstrained system---and it is this input that the substitution falsifies. The three-step conceptualization of the 2010 paper, with the unconstrained system set up first and the constraints imposed afterwards, is Neumann's rule of 1899 restated in Gauss's framework. The ``conflation'' of the unconstrained with the constrained system is the illegitimate form under another name. What the paper does not supply is an analysis of the substitution itself: which term it discards, and under which conditions $T^{+}$ may nevertheless be used. The classical conditions of Section~\ref{sec:ill-conditions} and the residue \eqref{eq:residue} answer that question. The relative merits of the two principles as working tools lie outside the present paper.

Second, the paper states that its equation works ``in situations [\dots] in which the Lagrange multiplier method breaks down'' and obviates ``the difficulties in finding the Lagrange multipliers'' \cite[p.~1256]{udwadia:2010}; the situations named are functionally dependent and nonlinear constraints. Neither occurs in the example, whose single constraint is linear and independent. The error occurs in forming $T$, before any multiplier is introduced, and the multiplier route is the one by which Hamel---and the paper itself in its eqs.~(1.9)--(1.12)---obtains the correct equations. For consistent ideal constraints $A\ddot q=b$ the fundamental equation is the multiplier solution in closed form,
\begin{equation*}
	M\ddot q=F+A^{\mathsf T}\lambda ,
	\qquad
	\lambda=(AM^{-1}A^{\mathsf T})^{+}\bigl(b-AM^{-1}F\bigr),
\end{equation*}
where the pseudoinverse selects one multiplier vector when the constraints are dependent; its derivation from the Lagrange and Maggi equations is given by Zegzhda et al.\ \cite{zegzhda:2016}. Whatever the merits of the closed form as a computational tool, the example does not exhibit them.

Third, the paper concludes that a rigorous explanation leads ``deeper to the foundations of analytical dynamics and to the use of newly developed concepts that deal with singular mass matrices'' \cite[p.~1265]{udwadia:2010}. Singular mass matrices arise only when $T^{+}$ is itself treated as the kinetic energy of an unconstrained system in the original coordinates. That the singular-mass equation then reproduces the wrong equations shows the consistency of the procedure, not the source of the error. The explanation of the example is the residue \eqref{eq:residue}, and it was complete in 1895--1904.

\subsection{The subsequent papers}\label{sec:deb-subsequent}

The subsequent literature adopted the name from this single source. Chen \cite{chen:2013} states that the ``intricacy surrounding the Hamel's embedding method'' is called the Hamel paradox as ``first phrased in'' the 2010 paper, ``the only known work regarding the Hamel paradox.'' He formalizes Rosenberg's observations into the \emph{Rosenberg conjecture} (substitution legitimate exactly for holonomic constraints), confirms the nonholonomic half and subjects the holonomic half to a criterion that his own holonomic examples pass (Appendix~\ref{app:chen}). Wanichanon and Cho \cite{wanichanon:2024}, reading the criterion as the claim that holonomic embedding can fail, reply that the \emph{complete} embedding of a holonomic constraint yields correct equations. Neumann's example of 1899 \cite[p.~436]{neumann:1899} contains the entire content of this exchange---an integrable constraint, an incomplete substitution, wrong equations, and the resolution---and Hadamard's theorem of 1895 \cite[Nos.~5, 8]{hadamard:1895} is, for velocity-level substitution, the Rosenberg conjecture. The circumstances under which the record was not consulted are discussed in Section~\ref{sec:discussion}.

\section{Systematic resolution}\label{sec:resolution}

This section restates the findings of Section~\ref{sec:illegitimate} in geometric terms: a single construction, the restriction of the mass metric to the constraint distribution, accounts for the failure, contains the classical admissibility conditions and reconciles the classical prohibition with the constrained-Lagrangian practice of geometric mechanics.

\subsection{Setting}\label{sec:geo-setting}

The configuration space is an $n$-dimensional manifold $Q$ with local coordinates $q^i$ ($i,j=1,\dots,n$); the coordinate vector fields $\partial_i:=\partial/\partial q^i$ and the coordinate one-forms $\mathrm dq^i$ form the dual bases of $T_qQ$ and $T_q^{*}Q$, $\langle\mathrm dq^i,\partial_j\rangle=\delta^i_j$, and a vector $w\in T_qQ$ with components $w^i$ is $w=w^i\partial_i$. The kinetic energy defines the \emph{mass metric} $\mathcal M$,
\begin{equation}\label{eq:massmetric}
	T=\tfrac12\,\mathcal M_{ij}(q)\,\dot q^i\dot q^j ;
\end{equation}
we treat the scleronomic catastatic case ($B_r=0$ in \eqref{eq:pfaff}) and comment on time-dependent data where relevant. With $m$ constraints, the number of degrees of freedom is $f:=n-m$; the index range $f+1,\dots,n$ is reserved for the constraint directions, in anticipation of the adapted frames of Section~\ref{sec:geo-frames}. The coefficient rows of \eqref{eq:pfaff-virtuell} are the components of the \emph{constraint one-forms}
\begin{equation}\label{eq:constraintforms}
	\theta^{f+r}:=B_{ri}(q)\,\mathrm dq^i ,
	\qquad r=1,\dots,m ,
\end{equation}
and the virtual conditions state, invariantly, that these forms annihilate the virtual displacements. They thus define at each configuration the subspace
\begin{equation}\label{eq:distribution}
	D_q=\bigl\{\,w\in T_qQ:\ \bigl\langle\theta^{f+r}_q,\,w\bigr\rangle=0,\ r=1,\dots,m\,\bigr\},
	\qquad \dim D_q=f ,
\end{equation}
and the assignment $q\mapsto D_q$ is the \emph{constraint distribution} $D\subset TQ$: admissible velocities and virtual displacements are exactly its elements, and the $\theta^{f+r}$ span the annihilator of $D$.

In this language the d'Alembert--Lagrange principle \eqref{eq:dalembert} is one covector statement: along the motion, the \emph{balance one-form}
\begin{equation}\label{eq:balance}
	R:=\bigl(\EL{i}(T)-Q_i\bigr)\,\mathrm dq^i
\end{equation}
annihilates $D$, that is, $\langle R,w\rangle=0$ for every $w\in D$. Route A of Section~\ref{sec:ill-operation} comprises the three ways of resolving this statement: expanding $R$ in the annihilator basis gives the multiplier form ($R=\lambda_r\,\theta^{f+r}$); pairing $R$ with a basis $\{w_\alpha\}_{\alpha=1}^{f}$ of $D$ gives Maggi's equations,
\begin{equation}\label{eq:maggi}
	\bigl\langle R,\,w_\alpha\bigr\rangle=0 ,
	\qquad \alpha=1,\dots,f ;
\end{equation}
and resolving the balance in an anholonomic frame adapted to $D$ gives the Boltzmann--Hamel equations, to which we now turn.

\subsection{Anholonomic frames and the frame form of the equations}\label{sec:geo-frames}

Let $\{e_\lambda\}$, $\lambda,\mu,\rho,\sigma=1,\dots,n$, be a frame field, $e_\lambda=b^i_\lambda\,\partial_i$, with dual coframe $\theta^\lambda=a^\lambda_i\,\mathrm dq^i$, $a^\lambda_i b^i_\mu=\delta^\lambda_\mu$. Velocities and virtual displacements are resolved as
\begin{equation}\label{eq:quasi}
	\omega^\lambda:=a^\lambda_i\,\dot q^i ,
	\qquad
	\delta\pi^\lambda:=a^\lambda_i\,\delta q^i
\end{equation}
(quasi-velocities and quasi-displacements); invariantly, $\omega^\lambda$ and $\delta\pi^\lambda$ are the natural pairings of the coframe element $\theta^\lambda$ with the velocity and with the virtual displacement, respectively. The \emph{structure functions} of the frame are defined by
\begin{equation}\label{eq:cdef}
	[e_\sigma,e_\rho]=c^\lambda_{\sigma\rho}\,e_\lambda ,
	\qquad
	c^\lambda_{\sigma\rho}
	=b^i_\sigma b^j_\rho\bigl(\partial_j a^\lambda_i-\partial_i a^\lambda_j\bigr),
\end{equation}
and vanish identically exactly when the frame is a coordinate frame. The name is Cartan's: the $c^\lambda_{\sigma\rho}$ measure the failure of the frame fields to commute, and they are equivalently the coefficients of the structure equations of the coframe, $\mathrm d\theta^\lambda=-\tfrac12\,c^\lambda_{\sigma\rho}\,\theta^\sigma\wedge\theta^\rho$; for a left-invariant frame on a Lie group they reduce to the structure \emph{constants} of its Lie algebra, and the structure equations become the Maurer--Cartan equations \cite[\S15.3a, eqs.~(15.23)--(15.24)]{frankel:2012}. In analytical mechanics they entered through Hamel's paper of 1904 and are known there as the \emph{Hamel coefficients} (or Boltzmann--Hamel symbols) \cite{hamel:1904a,papastavridis:2002,muller:2021}; the shift of name from ``constants'' to ``functions'' records that for a general frame on a configuration space they depend on the configuration. The theory rests on two short computations.

\begin{lemma}[noncommutation; transpositional relation]\label{lem:noncomm}
	For variations satisfying $\mathrm d\delta q^i=\delta\mathrm dq^i$,\footnote{This is a hypothesis on the variations, not an identity. It is needed only for the variational reading of \eqref{eq:central}; the frame identity of Lemma~\ref{lem:frameid}, and with it the Boltzmann--Hamel equations, follow from the chain rule alone (Appendix~\ref{app:frameid}). Whether the hypothesis is compatible with Chetaev-admissible variations in the presence of velocity constraints is a separate, variational question; see the end of Section~\ref{sec:geo-residue}.}
	\begin{equation}\label{eq:noncomm}
		\delta\omega^\lambda-\frac{\mathrm d}{\mathrm dt}\,\delta\pi^\lambda
		= c^\lambda_{\sigma\rho}\,\omega^\sigma\,\delta\pi^\rho .
	\end{equation}
\end{lemma}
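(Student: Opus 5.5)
Since $\omega^\lambda$ and $\delta\pi^\lambda$ are the same coframe coefficients $a^\lambda_i(q)$ applied to $\dot q^i$ and to $\delta q^i$, the plan is to expand both sides of \eqref{eq:noncomm} by the product and chain rules and compare. First, vary the quasi-velocity:
\begin{equation*}
	\delta\omega^\lambda=\partial_j a^\lambda_i\,\delta q^j\,\dot q^i+a^\lambda_i\,\delta\dot q^i .
\end{equation*}
Second, differentiate the quasi-displacement along the motion:
\begin{equation*}
	\frac{\mathrm d}{\mathrm dt}\delta\pi^\lambda=\partial_j a^\lambda_i\,\dot q^j\,\delta q^i+a^\lambda_i\,\frac{\mathrm d}{\mathrm dt}\delta q^i .
\end{equation*}
The hypothesis $\mathrm d\delta q^i=\delta\mathrm dq^i$, read as $\delta\dot q^i=\frac{\mathrm d}{\mathrm dt}\delta q^i$, makes the two terms with the undifferentiated coefficient $a^\lambda_i$ cancel in the difference. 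This is the only place the hypothesis enters, and the one point where care is needed; everything else is algebra.

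After relabelling dummy indices, the remainder is
\begin{equation*}
	\delta\omega^\lambda-\frac{\mathrm d}{\mathrm dt}\delta\pi^\lambda=\bigl(\partial_j a^\lambda_i-\partial_i a^\lambda_j\bigr)\,\dot q^i\,\delta q^j .
\end{equation*}
This is an antisymmetrized derivative of the coframe, i.e.\ the components of $\mathrm d\theta^\lambda$, contracted with $\dot q$ and $\delta q$.

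Next, convert to the frame by the duality $a^\lambda_i b^i_\mu=\delta^\lambda_\mu$, which gives $\dot q^i=b^i_\sigma\omega^\sigma$ and $\delta q^j=b^j_\rho\,\delta\pi^\rho$. Substituting these yields
\begin{equation*}
	b^i_\sigma b^j_\rho\bigl(\partial_j a^\lambda_i-\partial_i a^\lambda_j\bigr)\,\omega^\sigma\delta\pi^\rho ,
\end{equation*}
whose coefficient is exactly the expression for $c^\lambda_{\sigma\rho}$ in \eqref{eq:cdef}. The main obstacle is only bookkeeping of signs and index order. The index assignment ($\sigma$ on $\dot q$, $\rho$ on $\delta q$) must match the convention $c^\lambda_{\sigma\rho}=b^i_\sigma b^j_\rho(\partial_j a^\lambda_i-\partial_i a^\lambda_j)$ in \eqref{eq:cdef}.

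As a check, I would verify consistency with $[e_\sigma,e_\rho]=c^\lambda_{\sigma\rho}e_\lambda$. Differentiating $a^\lambda_i b^i_\mu=\delta^\lambda_\mu$ gives $a^\lambda_i\,\partial_j b^i_\mu=-b^i_\mu\,\partial_j a^\lambda_i$, which turns the bracket coefficients $a^\lambda_k\bigl(b^j_\sigma\partial_j b^k_\rho-b^j_\rho\partial_j b^k_\sigma\bigr)$ into the same antisymmetrized expression. I would also check that the signs agree with $\mathrm d\theta^\lambda=-\tfrac12 c^\lambda_{\sigma\rho}\theta^\sigma\wedge\theta^\rho$ on a simple example, such as the knife-edge frame.
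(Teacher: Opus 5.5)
Your proof is correct and is essentially the paper's computation: expand by the chain rule in the coframe coefficients $a^\lambda_i$, cancel the $a^\lambda_i\,\delta\dot q^i$ terms using the hypothesis, antisymmetrize by relabelling, and convert with $\dot q^i=b^i_\sigma\omega^\sigma$, $\delta q^j=b^j_\rho\,\delta\pi^\rho$ to land on \eqref{eq:cdef}. The only difference is that the paper writes the reversed difference $\frac{\mathrm d}{\mathrm dt}\delta\pi^\lambda-\delta\omega^\lambda$, and your index assignment ($\sigma$ on $\dot q$, $\rho$ on $\delta q$) matches the convention of \eqref{eq:cdef}, so the sign comes out right.
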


\begin{proof}
	$\frac{\mathrm d}{\mathrm dt}\delta\pi^\lambda-\delta\omega^\lambda
		=\dot a^\lambda_i\,\delta q^i-(\delta a^\lambda_i)\,\dot q^i
		=\bigl(\partial_j a^\lambda_i-\partial_i a^\lambda_j\bigr)\dot q^j\delta q^i
		=-\,c^\lambda_{\sigma\rho}\,\omega^\sigma\delta\pi^\rho$,
	by \eqref{eq:quasi}, \eqref{eq:cdef}, after the terms with $\mathrm d\delta q=\delta\mathrm dq$ cancel.
\end{proof}

Equation \eqref{eq:noncomm} is Hamel's transpositional relation \cite{hamel:1904a,hamel:1904b}; for a coordinate frame its right-hand side vanishes. Modern treatments are given in \cite{papastavridis:2002,flannery:2011}, and the admissible hypotheses are still being examined \cite{talamucci:2025,talamucci:2026}. Vierkandt's localization of the failure (Section~\ref{sec:ill-conditions}, item~1) is the observation that the derivation of Lagrange's equations relies on the vanishing of this deviation, which anholonomic velocity combinations do not provide.

Write $\overline T(q,\omega):=T$ for the kinetic energy as a function of the quasi-velocities, $\overline{\mathcal M}_{\lambda\mu}:=\mathcal M(e_\lambda,e_\mu)$ for the frame components of the mass metric, so that $\overline T=\tfrac12\overline{\mathcal M}_{\lambda\mu}\omega^\lambda\omega^\mu$, and
\begin{equation}\label{eq:framemomenta}
	p_\lambda:=\frac{\partial\overline T}{\partial\omega^\lambda}
	=\overline{\mathcal M}_{\lambda\mu}\,\omega^\mu ,
	\qquad
	e_\sigma(\overline T):=b^i_\sigma\,
	\Bigl(\frac{\partial\overline T}{\partial q^i}\Bigr)_{\!\omega}
\end{equation}
for the frame momenta and the frame derivative at frozen $\omega$.

Two covectors organize the bookkeeping: the \emph{momentum covector} and the \emph{applied-force covector},
\begin{equation}\label{eq:momentumcovector}
	p:=\frac{\partial T}{\partial\dot q^i}\,\mathrm dq^i=p_\lambda\,\theta^\lambda ,
	\qquad
	F:=Q_i\,\mathrm dq^i ,
\end{equation}
so that $\langle p,w\rangle=p_\lambda\,\delta\pi^\lambda$ for a virtual displacement field $w=\delta q^i\,\partial_i=\delta\pi^\lambda e_\lambda$, independently of the frame. For variations satisfying $\mathrm d\delta q^i=\delta\,\mathrm dq^i$, the definitions alone give the identity
\begin{equation}\label{eq:central}
	\frac{\mathrm d}{\mathrm dt}\bigl\langle p,\,w\bigr\rangle-\delta T
	=\bigl\langle R,\,w\bigr\rangle+\bigl\langle F,\,w\bigr\rangle ,
\end{equation}
and along the motion, for admissible $w$ ($\langle R,w\rangle=0$), the \emph{central equation} of Heun and Hamel,
\begin{equation}\label{eq:central-motion}
	\frac{\mathrm d}{\mathrm dt}\bigl\langle p,\,w\bigr\rangle-\delta T
	=\bigl\langle F,\,w\bigr\rangle
	\qquad(w\in D)
\end{equation}
\cite{hamel:1904a,papastavridis:2002}. Expanding $\langle p,w\rangle=p_\lambda\delta\pi^\lambda$ and $\delta\overline T=e_\lambda(\overline T)\,\delta\pi^\lambda+p_\lambda\,\delta\omega^\lambda$ in \eqref{eq:central} and eliminating $\delta\omega^\lambda$ by Lemma~\ref{lem:noncomm} yields, coefficient by coefficient, the following frame identity; the transpositional deviation $p_\lambda\bigl(\delta\omega^\lambda-\frac{\mathrm d}{\mathrm dt}\delta\pi^\lambda\bigr)$ is precisely the structure term. (A direct componentwise proof is given in Appendix~\ref{app:frameid}.)

\begin{lemma}[frame identity]\label{lem:frameid}
	For any frame and any $T$ of the form \eqref{eq:massmetric},
	\begin{equation}\label{eq:frameid}
		b^i_\sigma\,\EL{i}(T)
		=\frac{\mathrm d}{\mathrm dt}\,p_\sigma
		- e_\sigma(\overline T)
		+ c^\lambda_{\sigma\rho}\,\omega^\rho\,p_\lambda .
	\end{equation}
\end{lemma}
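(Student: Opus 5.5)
The identity \eqref{eq:frameid} is a pointwise statement about a curve $q(t)$ and the function $T$; it involves no variations. I would therefore prove it by the chain rule directly, as the footnote to Lemma~\ref{lem:noncomm} indicates, rather than through the variational hypothesis. The plan is to express both $\EL{i}(T)$ and the right-hand side in terms of $\overline T(q,\omega)$ with $\omega^\lambda=a^\lambda_i(q)\dot q^i$, and then compare.

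The first step is to relate coordinate and frame momenta. Since $T(q,\dot q)=\overline T(q,a(q)\dot q)$, we have
\[
\partial T/\partial\dot q^i=a^\lambda_i\,p_\lambda ,
\qquad
(\partial T/\partial q^i)_{\dot q}=(\partial\overline T/\partial q^i)_\omega+p_\lambda\,\partial_i a^\lambda_j\,\dot q^j .
\]
This gives
\[
\EL{i}(T)=\dot a^\lambda_i\,p_\lambda+a^\lambda_i\,\dot p_\lambda-(\partial\overline T/\partial q^i)_\omega-p_\lambda\,\partial_i a^\lambda_j\,\dot q^j ,
\qquad
\dot a^\lambda_i=\partial_j a^\lambda_i\,\dot q^j .
\]

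Next, I contract with $b^i_\sigma$. Using $b^i_\sigma a^\lambda_i=\delta^\lambda_\sigma$, the second term becomes $\dot p_\sigma$, and the third becomes $-e_\sigma(\overline T)$ by the definition \eqref{eq:framemomenta}. The first and fourth terms combine to
\[
p_\lambda\,b^i_\sigma\bigl(\partial_j a^\lambda_i-\partial_i a^\lambda_j\bigr)\dot q^j .
\]
Substituting $\dot q^j=b^j_\rho\,\omega^\rho$ (inverting \eqref{eq:quasi}) turns this into $p_\lambda\,b^i_\sigma b^j_\rho(\partial_j a^\lambda_i-\partial_i a^\lambda_j)\,\omega^\rho=c^\lambda_{\sigma\rho}\,\omega^\rho\,p_\lambda$ by \eqref{eq:cdef}. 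That completes the identity.

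The only real obstacle is the sign and index bookkeeping in the structure term. The index order in \eqref{eq:cdef} ($\partial_j a_i-\partial_i a_j$ paired with $b^i_\sigma b^j_\rho$) must match the placement of $\sigma$ as the first lower index. I would double-check this against the proof of Lemma~\ref{lem:noncomm}, where the same antisymmetric combination appears with $\dot q^j\delta q^i$. The other subtle point is to keep $(\partial/\partial q^i)_{\dot q}$ distinct from $(\partial/\partial q^i)_\omega$. As an alternative route, one could read off \eqref{eq:frameid} coefficientwise from \eqref{eq:central} and Lemma~\ref{lem:noncomm}, taking $\delta\pi^\lambda$ arbitrary. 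Since that route needs the commutation hypothesis, I prefer the direct chain-rule proof.
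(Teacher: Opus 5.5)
Your proposal is correct and is essentially the paper's own proof (Appendix~\ref{app:frameid}). Both compute $\partial T/\partial\dot q^i$ and $(\partial T/\partial q^i)_{\dot q}$ by the chain rule, merge the $\dot a^\lambda_i$ term with the extra coordinate-derivative term into $p_\lambda(\partial_j a^\lambda_i-\partial_i a^\lambda_j)\dot q^j$, contract with $b^i_\sigma$, and substitute $\dot q^j=b^j_\rho\omega^\rho$ to read off $c^\lambda_{\sigma\rho}$ from \eqref{eq:cdef}. The paper's proof, like yours, uses no hypothesis on variations.
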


The identity holds for every motion and every frame; no hypothesis on variations enters it. The step that follows---contracting the balance with the admissible frame fields---uses only the pointwise principle \eqref{eq:dalembert}. Contracting the balance \eqref{eq:balance} with the frame therefore resolves the d'Alembert--Lagrange principle into
\begin{equation}\label{eq:bh}
	\frac{\mathrm d}{\mathrm dt}\frac{\partial\overline T}{\partial\omega^\sigma}
	- e_\sigma(\overline T)
	+ c^\lambda_{\sigma\rho}\,\omega^\rho\,\frac{\partial\overline T}{\partial\omega^\lambda}
	=\overline Q_\sigma ,
	\qquad
	\overline Q_\sigma:=b^i_\sigma Q_i ,
\end{equation}
the \emph{Boltzmann--Hamel equations} \cite{boltzmann:1902,hamel:1904a}. For a coordinate frame ($c\equiv0$) they are Lagrange's equations; for the body frame of rigid-body kinematics ($c^\lambda_{\sigma\rho}=\varepsilon_{\sigma\rho\lambda}$) they reduce to Euler's equations---the standard consistency checks.

\paragraph{Adapted frames and correct embedding.}
Now adapt the frame to the constraints: take $e_\alpha$, $\alpha,\beta,\gamma=1,\dots,f$, to be a basis of $D$, and let the last $m$ coframe elements be the constraint forms, $\theta^s$, $s=f{+}1,\dots,n$, proportional to the rows $B_{ri}$. Then the constraints read $\omega^s=0$, the virtual conditions read $\delta\pi^s=0$, and the principle retains exactly the admissible components of \eqref{eq:bh}:
\begin{equation}\label{eq:bh-embedded}
	\begin{gathered}
		\boxed{\ \frac{\mathrm d}{\mathrm dt}\frac{\partial\overline T}{\partial\omega^\alpha}
			- e_\alpha(\overline T)
			+ c^\lambda_{\alpha\rho}\,\omega^\rho\,\frac{\partial\overline T}{\partial\omega^\lambda}
			=\overline Q_\alpha\ }\\[4pt]
		\text{with } \omega^s=0 \text{ imposed \emph{after} the differentiations}
	\end{gathered}
\end{equation}

These $f$ equations, together with the $m$ constraints, are the correctly embedded equations of motion---multiplier-free, and equivalent to Maggi's form by construction. Note that the sum over $\lambda$ in \eqref{eq:bh-embedded} includes the constrained directions: their contribution enters through the \emph{constrained momenta}
\begin{equation}\label{eq:Ps}
	P_s:=p_s\big|_{\omega^s=0}=\overline{\mathcal M}_{s\beta}\,\omega^\beta ,
\end{equation}
the general form of the quantity \eqref{eq:schneide-Pw} isolated in the knife-edge example.

\subsection{Restriction of the mass metric; the integrable case}\label{sec:geo-diagnosis}

In this setting the illegitimate form has a simple description. Substituting the constraints into the kinetic energy evaluates $\overline T$ at $\omega^s=0$:
\begin{equation}\label{eq:Tplus}
	\overline T^{+}(q,\omega^\alpha)
	:=\overline T\big|_{\omega^s=0}
	=\tfrac12\,\overline{\mathcal M}_{\alpha\beta}\,\omega^\alpha\omega^\beta ,
\end{equation}
which is the quadratic form of the mass metric \emph{restricted to the distribution} $D$; the coordinate function $T^{*}$ of \eqref{eq:Tstar} is the same object read in a coordinate description. The restriction is always well defined: it is the kinetic energy of the admissible states, and the constrained motion has exactly this energy. Whether it is the induced energy of a configuration manifold, on which Lagrange's equations apply, is decided by the Frobenius integrability of $D$.

\begin{proposition}[integrable case; complete versus incomplete embedding]\label{prop:leaf}
	Let $D$ be integrable.
	(a) In leaf-adapted coordinates ($D=\operatorname{span}\{\partial_\alpha\}$, constraints $\dot z^s=0$), the adapted frame is a coordinate frame, the structure functions vanish, and \eqref{eq:bh-embedded} reduces to the Lagrange equations of $\overline T^{+}$ on the leaf: the restricted metric is the induced metric of the integral manifold, and Route B, executed as this \emph{complete} embedding (dependent coordinates eliminated together with their velocities), coincides with Route A.
	(b) If instead only the velocities are eliminated while the corresponding coordinates survive as arguments of $T^{*}$, varied or held fixed, the resulting equations are in general incorrect.
\end{proposition}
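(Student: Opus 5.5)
For part (a), Frobenius supplies, near any point, coordinates $(y^\alpha,z^s)$ such that $D=\operatorname{span}\{\partial/\partial y^\alpha\}$ and the leaves are the slices $z^s=\text{const}$. I will work in these coordinates and choose the adapted frame $e_\alpha=\partial/\partial y^\alpha$, $e_s=\partial/\partial z^s$, with coframe $\theta^\alpha=\mathrm dy^\alpha$, $\theta^s=\mathrm dz^s$. The constraint forms $B_{ri}\,\mathrm dq^i$ span the same annihilator as the $\mathrm dz^s$. Replacing them by the $\mathrm dz^s$ is an invertible recombination, so it leaves $D$ and the embedded equations \eqref{eq:bh-embedded} unchanged. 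Since the coframe is exact, \eqref{eq:cdef} gives $c^\lambda_{\sigma\rho}\equiv0$, and the structure term in \eqref{eq:bh-embedded} drops out, including its constrained part $c^s_{\alpha\beta}\omega^\beta P_s$. Note that this holds even when $P_s\neq0$.

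What remains is $\frac{\mathrm d}{\mathrm dt}\,\partial\overline T/\partial\dot y^\alpha-\partial\overline T/\partial y^\alpha=\overline Q_\alpha$, with $\dot z^s=0$ imposed afterwards. The key step is to check that both terms commute with setting $\dot z=0$, $z=z_0$.

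\begin{enumerate}
\item[(i)] $\partial\overline T/\partial\dot y^\alpha$ evaluated at $\dot z=0$ equals $\overline{\mathcal M}_{\alpha\beta}(y,z)\dot y^\beta$, which is $\partial\overline T^{+}/\partial\dot y^\alpha$. The cross terms $\overline{\mathcal M}_{\alpha s}\dot z^s$ vanish on the constraint.
\item[(ii)] The time derivative along the motion is taken with $z(t)\equiv z_0$ and $\dot z\equiv0$. Differentiating the identity in (i) along this curve therefore gives the same result. This is legitimate because $\dot z\equiv0$ on an open time interval, so $\ddot z\equiv0$ as well.
\item[(iii)] $\partial\overline T/\partial y^\alpha$ contains only terms multiplied by $\dot z$ beyond $\partial\overline T^{+}/\partial y^\alpha$, so the two agree on $\dot z=0$.
\end{enumerate}

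Hence \eqref{eq:bh-embedded} becomes exactly the Lagrange equations of $\overline T^{+}(y,\dot y;z_0)$ on the leaf. Here $\overline T^{+}$ is the pullback of $\mathcal M$ under the inclusion of the leaf, that is, the induced metric. This is the complete embedding. It coincides with Route A because \eqref{eq:bh-embedded} is Route A resolved in a frame (Lemma~\ref{lem:frameid}).

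For part (b), a general proof is not needed: a claim that the equations are "in general incorrect" is established by one counterexample. I will use the holonomic constraint $\dot y=k(x)\dot x$, i.e.\ $y=K(x)$, in the spirit of Neumann and Rosenberg, or better a constraint written non-exactly, such as $\dot y=\dot x\,g(y)$. In the latter case, eliminating only $\dot y$ leaves $y$ as an argument of $T^{*}$. Concretely, take $T=\tfrac12(\dot x^2+\dot y^2)$ with $\dot y=g(y)\dot x$. This is integrable since it is one constraint in two dimensions. Then $T^{*}=\tfrac12(1+g(y)^2)\dot x^2$.

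Both readings of (B1) can then be checked against the correct equation. If $y$ is held fixed, the Lagrange equation in $x$ gives $\frac{\mathrm d}{\mathrm dt}\bigl[(1+g^2)\dot x\bigr]=0$, and the time derivative produces the term $2gg'\dot y\dot x=2g^2g'\dot x^2$. The correct Maggi equation is $\ddot x+g\ddot y=0$, i.e.\ $(1+g^2)\ddot x+gg'g\dot x^2=0$. The coefficients of $\dot x^2$ differ by a factor of two. If $y$ is varied, its Lagrange equation $gg'\dot x^2=0$ is plainly false. Either way the result is wrong unless $g'=0$.

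The structural explanation, deferred to Proposition~\ref{prop:decomposition}, is that the surviving coordinate slot contributes a term that the complete change of coordinates in (a) would have absorbed.

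The main obstacle is step (ii), the commuting of $\mathrm d/\mathrm dt$ with the substitution. I will handle it by stating explicitly that in leaf coordinates the constraint is the pair $z=z_0$, $\dot z=0$, holding identically in time. That is exactly what incomplete elimination lacks, which ties (a) and (b) together.
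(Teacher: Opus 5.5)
Your proposal is correct and follows the paper's route. For (a), you read off \eqref{eq:bh-embedded} with $c\equiv0$, and your checks (i)--(iii) spell out what the paper records in the proof of Proposition~\ref{prop:residue}. For (b), you give a counterexample: your $\dot y=g(y)\,\dot x$ generalizes the paper's $\dot y=y\,\dot x$ and also handles the ``varied'' reading, for which the paper instead cites Neumann's $\dot x=\varkappa x$.

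Drop the first candidate you float, $\dot y=k(x)\,\dot x$. There $T^{*}=\tfrac12(1+k^{2})\dot x^{2}$ is free of $y$, and its $x$-equation $(1+k^{2})\ddot x+kk'\dot x^{2}=0$ coincides with the correct one, so it is no counterexample.
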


\begin{proof}
	(a) is \eqref{eq:bh-embedded} with $c\equiv0$. For (b) a counterexample suffices. The oldest is Neumann's \cite[p.~436]{neumann:1899}: a point mass under the velocity constraint $\dot x=\varkappa x$, integrable in the rheonomic sense, for which the Lagrange equation of the eliminated coordinate $x$, formed from the substituted energy, has the sign of its inertia term reversed. For the equations of the retained coordinates, the integrable constraint $\dot y=y\,\dot x$ on a free particle serves as an example: the Lagrange equation of $T^{*}$ in $x$ with $y$ held fixed differs from the correct equation by the coordinate-slot term $y\,\partial T^{*}\!/\partial y$ of Proposition~\ref{prop:decomposition}, while the residue vanishes because the adapted frame fields commute (Appendix~\ref{app:examples-pair}).
\end{proof}

Proposition~\ref{prop:leaf} settles the holonomic sub-debate of \cite{chen:2013,wanichanon:2024}: ``complete embedding'' is the passage to the integral manifold, and only that passage is covered by the legitimacy of case (a). For the knife edge no such passage exists: Figure~\ref{fig:twist} shows the admissible planes above a fixed contact point twisting with $\vartheta$; they are the tangent planes of no surface.

\begin{figure}[t]
	\centering
	\includegraphics[scale=0.6]{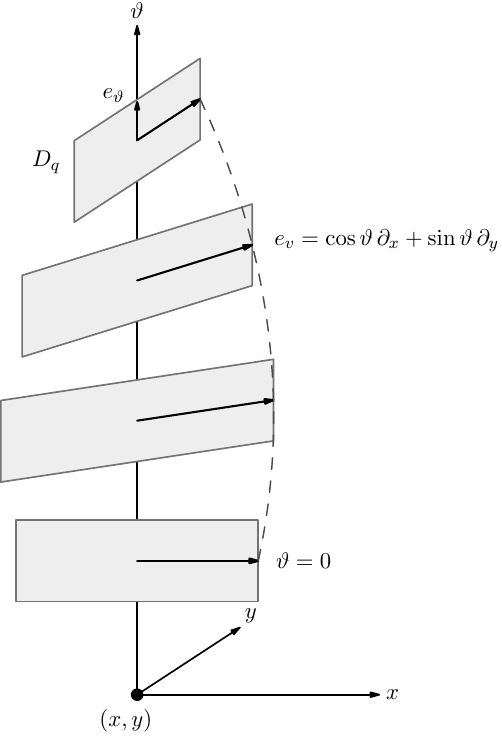}
	\caption{The constraint distribution of the knife edge above a fixed contact point $(x,y)$. At height $\vartheta$ the admissible plane $D_q$ is spanned by the rotation direction $e_\vartheta$ and the blade direction $e_v=\cos\vartheta\,\partial_x+\sin\vartheta\,\partial_y$, which rotates with $\vartheta$. The planes twist like a spiral staircase and are the tangent planes of no surface: $D$ is not integrable (Frobenius's condition fails: $c^{w}_{v\vartheta}\neq0$, Section~\ref{sec:geo-residue}, item~4), no leaf exists, and the restricted energy is the induced energy of no configuration submanifold (Proposition~\ref{prop:leaf}).}\label{fig:twist}
\end{figure}

\subsection{The residue and the criterion}\label{sec:geo-residue}

\begin{proposition}[residue of the illegitimate reduction]\label{prop:residue}
	Fix a constraint-adapted frame (scleronomic, catastatic). Along the constrained motion ($\omega^s=0$), the correctly embedded equations \eqref{eq:bh-embedded} differ from the frame equations formed from the restricted energy $\overline T^{+}$ alone,
	\begin{equation}\label{eq:reduced-op}
		\frac{\mathrm d}{\mathrm dt}\frac{\partial\overline T^{+}}{\partial\omega^\alpha}
		- e_\alpha(\overline T^{+})
		+ c^\beta_{\alpha\gamma}\,\omega^\gamma\,\frac{\partial\overline T^{+}}{\partial\omega^\beta}
		=\overline Q_\alpha ,
	\end{equation}
	by exactly the term
	\begin{equation}\label{eq:residue}
		\begin{gathered}
			\boxed{\ \mathrm{Res}_\alpha
				= c^{s}_{\alpha\beta}\,\omega^\beta\,P_s
				= c^{s}_{\alpha\beta}\,\overline{\mathcal M}_{s\gamma}\;\omega^\beta\omega^\gamma\ }\\[4pt]
			(\text{sum over }s>f;\ \beta,\gamma\le f)
		\end{gathered}
	\end{equation}
\end{proposition}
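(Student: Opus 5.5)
The plan is to compare \eqref{eq:bh-embedded} and \eqref{eq:reduced-op} term by term, after imposing $\omega^s=0$. Both equations share the index $\alpha\le f$ and the right-hand side $\overline Q_\alpha$. So the difference is the sum of three differences: momentum, frame derivative, and structure term. The tool is the block decomposition of the frame metric,
\begin{equation*}
	\overline T=\overline T^{+}+\overline{\mathcal M}_{s\beta}\,\omega^s\omega^\beta+\tfrac12\overline{\mathcal M}_{st}\,\omega^s\omega^t ,
\end{equation*}
together with the adapted-frame property that the last coframe elements are the constraint forms. Hence $\omega^s=0$ along the constrained motion identically in $t$.

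First, the momenta. We have $\partial\overline T/\partial\omega^\alpha=\overline{\mathcal M}_{\alpha\beta}\omega^\beta+\overline{\mathcal M}_{\alpha s}\omega^s$. On the constrained motion this equals $\partial\overline T^{+}/\partial\omega^\alpha$ as a function of $t$. Because $\omega^s(t)\equiv0$, the time derivatives also agree, and the first terms cancel. This is the point where ``imposing after differentiation'' costs nothing for the $\alpha$-momenta.

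Second, the frame derivatives at frozen $\omega$. Here $e_\alpha(\overline T)=\tfrac12 e_\alpha(\overline{\mathcal M}_{\lambda\mu})\omega^\lambda\omega^\mu$, and at $\omega^s=0$ only the $\beta\gamma$ block survives. This gives exactly $e_\alpha(\overline T^{+})$, because the derivative acts on $q$ only and the coefficient functions are evaluated before any velocity is set. These terms therefore cancel as well.

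Third, the structure term, which is where the residue comes from. Split $c^\lambda_{\alpha\rho}\omega^\rho p_\lambda$ over $\rho\in\{\gamma\}\cup\{t\}$ and $\lambda\in\{\beta\}\cup\{s\}$. The $\rho=t$ terms vanish since $\omega^t=0$. The term with $\lambda=\beta$, $\rho=\gamma$ has $p_\beta|_{\omega^s=0}=\partial\overline T^{+}/\partial\omega^\beta$, so it reproduces the structure term of \eqref{eq:reduced-op}. What remains is $c^{s}_{\alpha\gamma}\omega^\gamma p_s|_{\omega^s=0}=c^{s}_{\alpha\beta}\omega^\beta P_s$, with $P_s$ as in \eqref{eq:Ps}. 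Substituting $P_s=\overline{\mathcal M}_{s\gamma}\omega^\gamma$ gives the second form of \eqref{eq:residue}.

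Subtracting the two equations, \eqref{eq:bh-embedded} is \eqref{eq:reduced-op} with $\mathrm{Res}_\alpha$ added on the left. I do not expect a real obstacle. The one point needing care is the first step: one must make sure that $\omega^s=0$ holds along the whole trajectory, not just at an instant. Only then is substituting before differentiating harmless for the retained momenta. This is exactly where the constrained momenta $p_s$, which are generally nonzero, must not be confused with $\partial\overline T^{+}/\partial\omega^s$, which is absent.
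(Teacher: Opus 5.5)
Your proposal is correct and follows the paper's own proof. Both decompose $\overline T$ by powers of $\omega^s$ and match the admissible momenta, together with their time derivatives, using $\omega^s\equiv0$ along the motion. Both also match the frame derivatives at $\omega^s=0$, then split the structure sum over admissible and constrained $\rho$ and $\lambda$, and the constrained-$\lambda$ group gives $c^{s}_{\alpha\beta}\,\omega^\beta P_s$.
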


\begin{proof}
	Decompose $\overline T=\overline T^{+}+\overline{\mathcal M}_{s\beta}\,\omega^s\omega^\beta+\tfrac12\overline{\mathcal M}_{ss'}\,\omega^s\omega^{s'}$ by powers of the constrained quasi-velocities; the two mixed blocks of the quadratic form merge by the symmetry of $\overline{\mathcal M}$, which is where the factor $\tfrac12$ of the mixed block disappears. At $\omega^s=0$: the admissible momenta agree, $p_\alpha|_0=\partial\overline T^{+}\!/\partial\omega^\alpha$ (hence so do their time derivatives along constrained motions); the frame derivatives agree, $e_\alpha(\overline T)|_0=e_\alpha(\overline T^{+})$; and the constrained momenta are $p_s|_0=P_s$ of \eqref{eq:Ps}. In the structure sum of \eqref{eq:bh-embedded}, $\omega^\rho|_0$ retains only admissible $\rho=\beta$, and splitting $\lambda$ into admissible and constrained values yields the two groups of \eqref{eq:reduced-op} and \eqref{eq:residue}.
\end{proof}

Figure~\ref{fig:frameplane} shows the geometry of Proposition~\ref{prop:residue} on the knife edge: the constrained momentum $P_w$ is the component of the momentum covector along the coordinate complement, and it is the tilt of the level lines of $\overline T$ that carries it.

\begin{figure}[t]
	\centering
	\includegraphics[scale=0.6]{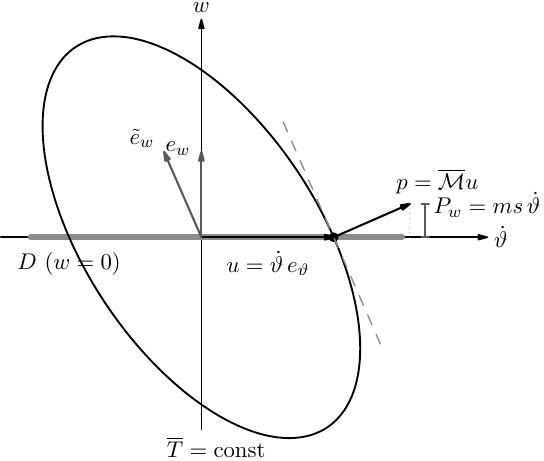}
	\caption{The frame plane $(\dot\vartheta,w)$ of the knife edge at $v=0$. The level line of the frame kinetic energy $\overline T=\tfrac12mw^{2}+ms\dot\vartheta\,w+\tfrac12I_B\dot\vartheta^{2}$ is an ellipse tilted by the mixed term. The distribution $D$ is the axis $w=0$; the coordinate complement $e_w$ is the vertical direction, and the $\mathcal M$-orthogonal complement $\tilde e_w=e_w-(ms/I_B)\,e_\vartheta$ is the diameter conjugate to $D$, the tangent direction of the ellipse where it crosses $D$ (dashed). For a velocity $u=\dot\vartheta\,e_\vartheta$ on $D$ the momentum covector $p=\overline{\mathcal M}u$ has the $w$-component $P_w=ms\dot\vartheta$, the constrained momentum \eqref{eq:Ps} that the restricted energy discards; in the frame with $\tilde e_w$ this component vanishes, and the coupling reappears in the structure functions (Remark~\ref{rem:orthogonal}). Drawn for $m=1$, $I_B=1.6$, $ms=0.7$.}\label{fig:frameplane}
\end{figure}

The residue \eqref{eq:residue} contains the classical conditions:

\begin{enumerate}
	\item \emph{Hamel's rule for the frame form} (No.~237; Section~\ref{sec:ill-conditions}, item~5): terms quadratic in the constrained velocities never contribute---only the linear terms, i.e., the couplings $\overline{\mathcal M}_{s\beta}$, are dynamically active.
	\item \emph{Admissibility.} $\mathrm{Res}\equiv0$ for all motions exactly when the part of the coefficient
	      \begin{equation}\label{eq:Kcoeff}
		      K_{\alpha\beta\gamma}:=c^{s}_{\alpha\beta}\,\overline{\mathcal M}_{s\gamma}
	      \end{equation}
	      symmetric in $\beta\gamma$ vanishes. Two structural mechanisms produce this, direction by direction: $\overline{\mathcal M}_{s\beta}=0$ (the energy splits; no metric coupling between $D$ and $e_s$), or $c^{s}_{\alpha\beta}=0$ on $D$ (the Frobenius tensor of $D$ has no component along $e_s$). They are \emph{sufficient}; for a single constraint they are also necessary at each configuration (Appendix~\ref{app:hamel-par8}); for several constraints they are not, not even after a linear recombination of the constraints. Under a recombination $\tilde\theta^{s}=\Lambda^{s}_{s'}(q)\,\theta^{s'}$ with the complement kept, $K$ is invariant: on $D$ the structure functions transform with $\Lambda$, because $\mathrm d\Lambda^{s}_{s'}\wedge\theta^{s'}$ vanishes on admissible pairs, the couplings transform with $\Lambda^{-1}$, and the two cancel in the sum over $s$. If, after some recombination, every constrained direction satisfied one of the two mechanisms, every term of that sum would vanish and $K$ with it; a cancellation with $K\neq0$ therefore cannot be produced by the two mechanisms in any recombination. Such a $K$ is antisymmetric in $\alpha\beta$; since its symmetric part in $\beta\gamma$ vanishes, it is antisymmetric in $\beta\gamma$ as well, hence totally antisymmetric. This requires $f\ge3$; and since the covectors $K_{\alpha\beta\,\cdot}$ lie in the span of the $m$ coupling rows, it also requires $m\ge3$ (Appendix~\ref{app:examples-cancel}). The same structure appears on the variational side: the compatibility of Chetaev-admissible variations with the transition equation $\mathrm d\delta q=\delta\mathrm dq$ coincides with integrability for a single constraint, while for several constraints the non-integrable parts can compensate one another \cite[Sect.~5.1]{talamucci:2026}.

	      The exact condition for a given system, in coordinates and for a substituted energy whose coefficients depend on the retained coordinates alone, is Appell's criterion of 1899 (Section~\ref{sec:ill-conditions}, item~2). Hamel's \S8 conditions of 1904 concern the Lagrange equations of the retained coordinates with the abbreviated energy free of the eliminated ones. They assert that, after a suitable linear recombination of the constraints, the two mechanisms exhaust the possibilities; his extreme cases are ``the energy splits into two separate parts'' and ``the admissible infinitesimal transformations commute,'' and the latter is integrability, i.e., Hadamard's case \cite[No.~8]{hadamard:1895}. His argument, however, requires the unsymmetrized coefficient $K$ itself to vanish (his equation (8)), which is more than the residue demands. The necessity claim therefore does not hold in this generality, and the example with the coordinate complement lies within the hypotheses of his theorem (Appendix~\ref{app:hamel-par8}).
	\item \emph{Korteweg's exception.} $\mathrm{Res}$ is quadratic in the velocities; in the theory of small oscillations about an equilibrium at rest it is of second order and drops out---Section~\ref{sec:ill-conditions}, item~4. About a steady motion with nonzero velocity its linearization keeps terms of first order.
	\item \emph{The knife edge.} With the frame $(e_v,e_w,e_\vartheta)$ of Section~\ref{sec:ill-schneide} one finds $c^{w}_{v\vartheta}=-1$, $c^{w}_{\vartheta v}=+1$, $P_w=ms\dot\vartheta$, and \eqref{eq:residue} returns $\mathrm{Res}_v=-ms\dot\vartheta^2$ and $\mathrm{Res}_\vartheta=+ms\dot\vartheta v$---exactly the two terms lost in \eqref{eq:schneide-falsch} (Figure~\ref{fig:frameplane}).
\end{enumerate}

\subsection{The coordinate shortcut and Bloch's equations}\label{sec:geo-decomposition}

Proposition~\ref{prop:residue} compares two computations \emph{within the frame calculus}; it quantifies variant (B3). The coordinate-level shortcut (B1) commits a second, independent error, and the two can be separated exactly. For constraints in solved form the coordinates single out a frame: the admissible fields $w_\alpha:=\partial_\alpha+A^{s}_{\alpha}\,\partial_s$ and the \emph{coordinate complement} $e_s:=\partial_s$.

\begin{proposition}[decomposition of the full-elimination shortcut]\label{prop:decomposition}
	Let the catastatic constraints be given in solved form, $\dot q^{s}=A^{s}_{\beta}(q)\,\dot q^{\beta}$, and let $T^{*}(q,\dot q^{\beta})$ be the fully eliminated energy of variant (B1). Then, along the constrained motion,
	\begin{equation}\label{eq:decomposition}
		\boxed{\ \EL{\alpha}\bigl(T^{*}\bigr)-\bigl(Q_\alpha+A^{s}_{\alpha}Q_s\bigr)
			\;=\;
			A^{s}_{\alpha}\,\frac{\partial T^{*}}{\partial q^{s}}
			\;-\;
			c^{s}_{\alpha\beta}\,\dot q^{\beta}P_s\ }
	\end{equation}
	with $c^{s}_{\alpha\beta}=w_\alpha\bigl(A^{s}_{\beta}\bigr)-w_\beta\bigl(A^{s}_{\alpha}\bigr)$ the structure functions of the coordinate-adapted frame (the Frobenius data of the solved form) and $P_s$ the constrained momenta \eqref{eq:Ps} taken in the coordinate complement.
\end{proposition}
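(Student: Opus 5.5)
The plan is to obtain \eqref{eq:decomposition} as a corollary of Proposition~\ref{prop:residue}, applied in the particular frame that the solved form singles out. After that, the only task is to translate each frame quantity of \eqref{eq:reduced-op} into coordinate language.

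\emph{Step 1 (the frame).} Take the admissible fields $e_\alpha:=w_\alpha=\partial_\alpha+A^{s}_{\alpha}\partial_s$ and the coordinate complement $e_s:=\partial_s$. The dual coframe is $\theta^\alpha=\mathrm dq^\alpha$ and $\theta^s=\mathrm dq^s-A^{s}_{\alpha}\mathrm dq^\alpha$. Hence
\[
\omega^\alpha=\dot q^\alpha ,\qquad \omega^s=\dot q^s-A^{s}_{\alpha}\dot q^\alpha ,
\]
so the constraints read $\omega^s=0$ and the frame is constraint-adapted.

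\emph{Step 2 (structure functions).} Insert $b^\beta_\alpha=\delta^\beta_\alpha$, $b^{s}_{\alpha}=A^{s}_{\alpha}$, $a^{s}_{\beta}=-A^{s}_{\beta}$, $a^{s}_{s'}=\delta^{s}_{s'}$ into the formula \eqref{eq:cdef}. Collecting the four index blocks gives
\[
c^{s}_{\alpha\beta}=\partial_\alpha A^{s}_{\beta}-\partial_\beta A^{s}_{\alpha}+A^{s'}_{\alpha}\partial_{s'}A^{s}_{\beta}-A^{s'}_{\beta}\partial_{s'}A^{s}_{\alpha}=w_\alpha(A^{s}_{\beta})-w_\beta(A^{s}_{\alpha}) .
\]
Equivalently, $[w_\alpha,w_\beta]=c^{s}_{\alpha\beta}\partial_s$. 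This bracket has no component along the admissible fields, so $c^{\gamma}_{\alpha\beta}=0$ for admissible $\gamma$. Consequently the structure term inside \eqref{eq:reduced-op} drops out.

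\emph{Step 3 (identify the reduced operator).} Since $\omega^\alpha=\dot q^\alpha$, the restricted energy $\overline T^{+}(q,\omega^\alpha)$ is the same function as $T^{*}(q,\dot q^\beta)$. Its momenta $\partial\overline T^{+}/\partial\omega^\alpha$ are therefore $\partial T^{*}/\partial\dot q^\alpha$. The frame derivative at frozen $\omega$ is the directional derivative along $w_\alpha$ at frozen $\dot q^\beta$:
\[
e_\alpha(\overline T^{+})=\frac{\partial T^{*}}{\partial q^\alpha}+A^{s}_{\alpha}\frac{\partial T^{*}}{\partial q^{s}} .
\]
With Step~2, the left side of \eqref{eq:reduced-op} therefore equals $\EL{\alpha}(T^{*})-A^{s}_{\alpha}\,\partial T^{*}/\partial q^{s}$. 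The generalized force is $\overline Q_\alpha=b^{i}_{\alpha}Q_i=Q_\alpha+A^{s}_{\alpha}Q_s$.

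\emph{Step 4 (conclude).} Proposition~\ref{prop:residue} says that, along the constrained motion, the correct equation is the reduced operator plus $\mathrm{Res}_\alpha$, equal to $\overline Q_\alpha$. Writing this out gives
\[
\EL{\alpha}(T^{*})-A^{s}_{\alpha}\frac{\partial T^{*}}{\partial q^{s}}+c^{s}_{\alpha\beta}\dot q^\beta P_s=Q_\alpha+A^{s}_{\alpha}Q_s .
\]
Rearranging yields \eqref{eq:decomposition}. Here $P_s=\overline{\mathcal M}_{s\beta}\dot q^\beta$ with $\overline{\mathcal M}_{s\beta}=\mathcal M(\partial_s,w_\beta)$, that is, $P_s$ is taken in the coordinate complement, as the statement requires.

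\emph{Main obstacle.} The substantive points are the bookkeeping of Step~3 and the sign convention of Step~2. For Step~3, one must confirm that holding $\omega$ fixed coincides with holding the free velocities $\dot q^\beta$ fixed. This is true because $\omega^\alpha=\dot q^\alpha$ exactly, and it is precisely what produces the coordinate-slot term $A^{s}_{\alpha}\,\partial T^{*}/\partial q^{s}$. For Step~2, the index expansion above must be checked against the sign convention of \eqref{eq:cdef}. As a final check, I would test the result on the knife edge and on the integrable pair $\dot y=y\dot x$ of Proposition~\ref{prop:leaf}. In the second example the residue term must vanish while the slot term $y\,\partial T^{*}/\partial y$ survives.
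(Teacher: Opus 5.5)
Your proof is correct and follows the paper's route. Like the paper, you work in the coordinate-adapted frame $w_\alpha=\partial_\alpha+A^{s}_{\alpha}\partial_s$, $e_s=\partial_s$, use $\omega^\alpha=\dot q^\alpha$, $\overline T^{+}=T^{*}$ and $c^{\gamma}_{\alpha\beta}=0$, and obtain the coordinate-slot term from $e_\alpha(\overline T^{+})=\partial_\alpha T^{*}+A^{s}_{\alpha}\partial_s T^{*}$. The only difference is bookkeeping: the paper contracts the frame identity \eqref{eq:frameid} with $w_\alpha$ directly and then applies the d'Alembert--Lagrange principle, whereas you cite Proposition~\ref{prop:residue}, which packages the same identity.
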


\begin{proof}
	For this frame the coframe is $\theta^\alpha=\mathrm dq^\alpha$, $\theta^{s}=\mathrm dq^{s}-A^{s}_{\beta}\mathrm dq^{\beta}$, so $\omega^\alpha=\dot q^\alpha$ and $\overline T^{+}=T^{*}$ \emph{exactly}; moreover $c^{\gamma}_{\alpha\beta}=0$, since $[w_\alpha,w_\beta]$ has only $\partial_s$-components. The frame identity \eqref{eq:frameid}, contracted with $w_\alpha$ and evaluated on $\omega^{s}=0$, therefore reads
	$w^i_\alpha\,\EL{i}(T)
		=\frac{\mathrm d}{\mathrm dt}\frac{\partial T^{*}}{\partial\dot q^{\alpha}}
		-\partial_\alpha T^{*}-A^{s}_{\alpha}\,\partial_s T^{*}
		+c^{s}_{\alpha\beta}\,\dot q^{\beta}P_s$,
	and the d'Alembert--Lagrange principle sets the left side equal to the projected force $Q_\alpha+A^{s}_{\alpha}Q_s$. Rearranging gives \eqref{eq:decomposition}.
\end{proof}

Equation~\eqref{eq:decomposition} is not new in substance. With $Q=0$ it is equation (4) of Bloch, Marsden and Zenkov \cite{bloch:2005}, where $c^{s}_{\alpha\beta}$ appears as the curvature of the Ehresmann connection defined by the constraints; and for Chaplygin systems, in which neither $T$ nor $A^{s}_{\beta}$ depends on the $q^{s}$, the first term on the right vanishes and the equation is Chaplygin's own equation (7) of 1897 \cite[p.~120]{chaplygin:2002}, in the form Neimark and Fufaev transmit it \cite[eq.~(3.17)]{neimark:1972}. What is new here is only the reading: the two terms are the two errors of the coordinate-level shortcut---the coordinate-slot term and the residue of Proposition~\ref{prop:residue}---quantified separately.

The two terms on the right are the two classical failure modes in pure form. The \emph{coordinate-slot term} $A^{s}_{\alpha}\,\partial T^{*}\!/\partial q^{s}$ is active whenever the eliminated coordinates survive as arguments of $T^{*}$; the example $\dot y=y\dot x$ of Proposition~\ref{prop:leaf} shows it alone, and Neumann's example \cite[p.~436]{neumann:1899} shows the same chain-rule mechanism in the equation of an eliminated coordinate. The \emph{residue of the coordinate complement} is the term of Proposition~\ref{prop:residue} in the frame that the coordinates select. Variant (B1) is exact precisely when the right-hand side vanishes along every motion. Hamel's conditions of 1904 are the anatomy of this requirement: his demand that the abbreviated $T$ be free of the eliminated coordinates removes the first term, and his bracket and splitting conditions remove the second (Section~\ref{sec:ill-conditions}, Appendix~\ref{app:hamel-par8}). For the knife edge with $s=0$ and for Rosenberg's fallacy \cite[Ch.~14]{rosenberg:1977} the formula returns the errors of the shortcut term by term (Appendix~\ref{app:examples-pair}).

If $D$ is not integrable, the restricted energy is still well defined: $\overline T^{+}$ is a fiberwise quadratic form on the vector bundle $D$---exactly the constrained Lagrangian of geometric mechanics \cite{bloch:1996,bloch:2005}. What fails is more specific: no leaf exists (Figure~\ref{fig:twist}), so $\overline T^{+}$ is not the induced kinetic energy of any configuration submanifold, and the constrained system is not a Lagrangian system in fewer coordinates. And $\overline T^{+}$ alone does not determine the dynamics: the restricted energy of the knife edge, $\tfrac12mv^{2}+\tfrac12I_B\dot\vartheta^{2}$, is independent of the offset $s$ at fixed $I_B$, while the correct equations \eqref{eq:schneide-korrekt} depend on $s$ explicitly. The discarded information---the constrained momenta $P_s$, i.e., the coupling of $D$ to its complement in the mass metric---is precisely what the correct equations retain, apportioned in one way or another between the momenta and the structure functions (Remark~\ref{rem:orthogonal}).

The derivation above is scleronomic and catastatic; Neumann's affine example \cite[p.~436]{neumann:1899} and Hamel's rheonomic gear \cite[No.~240]{hamel:1949} lie outside it, and both extensions add terms of their own. For affine values $\omega^{s}=\psi^{s}(q)$ the restricted energy depends on the configuration also through the substituted values: its frame derivative differs from $e_\alpha(\overline T)$ on the constrained motion by the chain-rule term $P_s\,e_\alpha(\psi^{s})$, the quadratic block of $\overline T$ contributes to $P_s$, and the structure sum keeps terms with a constrained lower index.

For a time-dependent coframe $a(q,t)$ the proof of Appendix~\ref{app:frameid} goes through with $\mathrm da^\lambda_i/\mathrm dt$ acquiring the explicit term $\partial_t a^\lambda_i$: the frame identity \eqref{eq:frameid} acquires the \emph{transport term} $+\,d^{\lambda}_{\sigma}\,p_\lambda$ with $d^{\lambda}_{\sigma}:=b^i_\sigma\,\partial_t a^\lambda_i$, and the noncommutation relation \eqref{eq:noncomm} acquires $-\,d^{\lambda}_{\rho}\,\delta\pi^{\rho}$ on its right-hand side. Hamel's gear realizes the time-dependent analogue of the residue: two wheels with moments of inertia $I_1,I_2$ are coupled by the rheonomic constraint $\dot\varphi_1=f(t)\,\dot\varphi_2$ \cite[No.~240]{hamel:1949}. With the adapted coframe $\theta^{1}=\mathrm d\varphi_2$, $\theta^{2}=\mathrm d\varphi_1-f\,\mathrm d\varphi_2$, all structure functions vanish, $d^{2}_{1}=-\dot f$ and $P_2=I_1f\dot\varphi_2$, so the transport term $d^{2}_{1}P_2$ is exactly the term $-I_1f\dot f\,\dot\varphi_2$ that the equation formed from $T^{*}=\tfrac12\bigl(I_1f^{2}+I_2\bigr)\dot\varphi_2^{\,2}$ misses.

\begin{remark}[frame dependence; the orthogonal complement; reconciliation with the constrained-Lagrangian doctrine]\label{rem:orthogonal}
	The residue is not an invariant of the constrained system alone: the adapted frame fixes the coframe elements $\theta^s$ but not the complementary fields $e_s$, and $P_s$, the splitting of \eqref{eq:bh-embedded} and $\mathrm{Res}$ all refer to that choice. What is invariant is the restricted quadratic form \eqref{eq:Tplus}, since the admissible quasi-velocities are unchanged on $D$; what varies is how the discarded information is distributed between the constrained momenta and the structure functions.

	Each variant of Route B selects its complement: full elimination (B1) selects the coordinate complement $\partial_s$ (Proposition~\ref{prop:decomposition}). Choosing instead the complement $\mathcal M$-orthogonal to $D$ makes $\overline{\mathcal M}_{s\beta}=0$, hence $P_s\equiv0$ on $D$ and $\mathrm{Res}\equiv0$ identically---for every state and every applied force: with this choice the equations of the restricted energy \eqref{eq:reduced-op} alone are the correct equations of motion. Orthogonality is necessary and sufficient for $P_s\equiv0$; the residue can also vanish through $c^{s}_{\alpha\beta}=0$ on $D$ or by cancellation, so the orthogonal choice is the one that always works, not the only one. For the knife edge, $\tilde e_w=e_w-(ms/I_B)\,e_\vartheta$ achieves this (Figure~\ref{fig:frameplane}), and \eqref{eq:reduced-op} reproduces \eqref{eq:schneide-korrekt} verbatim. The restricted energy is now free of the mixed term; the offset enters through $I_B=I_S+ms^{2}$ and, in addition, through the structure functions of the new frame, $\tilde c^{\vartheta}_{\vartheta v}=ms/I_B$. The discarded information is not annihilated by the orthogonal choice; it is relocated from the momenta $P_s$ into the frame geometry.

	Three constructions of geometric mechanics are to be kept apart. The constrained Lagrangian---the restriction of $L$ to $D$ \cite{bloch:1996,bloch:2005,bloch:2015}---needs no complement at all. Its equations of motion in the form of Bloch, Marsden and Zenkov use the coordinate fields of the eliminated coordinates as complement---the vertical space of their Ehresmann connection---and the two correction terms of that form are those of Proposition~\ref{prop:decomposition}. The $\mathcal M$-orthogonal complement is a third object: it is the splitting behind the constrained connection of Lewis \cite{lewis:1996}, i.e., the Levi-Civita connection of the mass metric projected orthogonally onto $D$. None of this rehabilitates Route B: the operator \eqref{eq:reduced-op} still carries the structure functions $c^\beta_{\alpha\gamma}$ of a frame of $D$, and a \emph{coordinate} frame of $D$---which is what Lagrange's equations presuppose---exists exactly when $D$ is integrable. What is prohibited is not the restriction of the energy but the pretense that a coordinate calculus can process it.
\end{remark}

\subsection{Neighboring lines}\label{sec:geo-comparison}

Two further lines touch the question but not the debate. The Russian line, with full knowledge of the classical sources, develops Hadamard's and Hamel's conditions as the \emph{Hadamard--Hamel problem}, with constructive criteria and applications to wheeled vehicles \cite{borisov:2015}; in the present language, these conditions classify the constellations in which \eqref{eq:residue} vanishes for the equations one intends to keep.

A frame-based line has absorbed Hamel's equations. M\"uller \cite{muller:2021,muller:2023} develops \eqref{eq:bh} as a universal local-coordinate approach to constrained multibody and space systems, in which the Hamel coefficients are the structure functions \eqref{eq:cdef} and the choice of frame plays the role of the complement choice of Remark~\ref{rem:orthogonal}. Wensing \cite{wensing:2026} derives \eqref{eq:bh} from d'Alembert's principle via a Lie-bracket identity of the frame fields---the identity \eqref{eq:frameid}---and handles constraints expressed as vanishing generalized speeds by omitting the corresponding equations, which is the embedding \eqref{eq:bh-embedded}. Flannery's transpositional derivation also grounds Chetaev's rule for nonlinear velocity constraints \cite{flannery:2011}; its variational hypotheses remain under analysis \cite{talamucci:2025,talamucci:2026}. Extensions reach infinite-dimensional systems, field theories and structure-preserving integrators \cite{shi:2017,shi:2020,gao:2023}. These works share the algebra of Section~\ref{sec:resolution} and aim at formulation and computation; the legitimacy of the substitution is not posed as a question in them, and the classical record of Section~\ref{sec:ill-record} is absent.

Bloch's monograph \emph{Nonholonomic Mechanics and Control}, the standard reference of the geometric line, states the constrained-Lagrangian doctrine in its sharpest form: $L_c=L(q,\mathrm{hor}\,\dot q)$ is defined intrinsically and the curvature of the chosen Ehresmann connection is attached from the outset (Theorem~5.2.2); this is Proposition~\ref{prop:decomposition} in bundle language, with the coordinate complement as vertical space. A large collection of worked systems is developed toward control applications \cite[Chs.~3, 5]{bloch:2015,bloch:2005}. It is also the work of the modern line with the closest contact to the classical sources: Ferrers, Neumann (1888), Vierkandt, Korteweg, Chaplygin, Neimark--Fufaev and Rosenberg all appear. The substitution question, however, is not among the questions put to them. The episode of the 1890s is read as the variational-versus-d'Alembert alternative; Korteweg's critique is glossed as directed at ``the variational equations,'' although his targets had substituted the constraints into the \emph{vis viva} and applied Lagrange's formalism (Section~\ref{sec:ill-record}). Neumann appears with the article of 1888 but not with the paper of 1899, and Hadamard and the admissibility program are absent. The legitimacy question is thus answered in substance---the correction is never dropped---but not posed as a question.

What the present account adds to this literature is interpretation and demarcation rather than algebra: the identification of the substituted function as the restricted mass metric \eqref{eq:Tplus}; the resulting dichotomy---induced metric of a leaf, or fiberwise energy on $D$ that no longer determines the dynamics by itself---with Frobenius integrability as the discriminator inherited from Hadamard \cite[Nos.~5, 8]{hadamard:1895}; the residue formula \eqref{eq:residue} as the common source of the classical conditions; and the connection of these results to the record in which they originated. In this form---the restriction of a metric, one integrability test, and one formula for what is lost---the result is teachable.

\section{Discussion and conclusion}\label{sec:discussion}

Three limitations of scope apply. First, the analysis concerns the d'Alembertian route to the equations of motion. The variational question for nonholonomic systems, in which sense Hamilton's principle survives and how vakonomic dynamics differs, is a separate one with its own literature: H\"older's delimitation \cite{holder:1896}, Hamel's reading \cite[No.~105]{hamel:1949}, the realization problem \cite{kozlov:1983} and the experimental adjudication in favor of the d'Alembertian equations for rolling contact \cite{lewis:1995}. Premature substitution into an action functional is a distinct pitfall of the same family; its diagnosis is due to Flannery \cite{flannery:2005,flannery:2011}. Second, the constraints treated here are linear (Pfaffian); nonlinear velocity constraints and servo constraints raise separate issues. Third, what the paper adds to the modern debate is the record and the classical conditions, not a correction of its mathematics: the fundamental-equation framework \cite{udwadia:1996} is not at issue; what the paper questions is only what the example is taken to show about the foundations (Section~\ref{sec:ill-rediscovery}).

The mechanism by which the result was lost can be stated briefly: the primary literature of the 1890s and 1900s is written in German and French, was not translated into English before the 2010s (except Chaplygin's report \cite{chaplygin:2002}), and circulates mainly as page scans without a text layer; its twentieth-century carriers are specialist treatises (Section~\ref{sec:ill-record}). The papers of the modern debate, citing six \cite{udwadia:2010} and seven \cite{chen:2013} references, validate one another, while the Russian literature, in which the classical sources remained in use, was not affected \cite{borisov:2015,borisov:2016}.

The operation called ``Hamel's paradox'' since 2010 has had a name since 1899: the illegitimate form of the kinetic energy. Its prohibition, its explanation and its limits of validity were established between 1892 and 1908, restated by Hamel in the book cited as the paradox's source, and kept in print by the specialist literature and by geometric mechanics. The geometric analysis separates three statements that the debate had fused. Universal legitimacy of the coordinate shortcut, for every mass metric and every applied force, holds exactly when the distribution is integrable and the passage to the integral manifold is completed. For a given system the shortcut can be exact without integrability, through metric decoupling, vanishing bracket components or cancellation between constrained directions. The first two are the classical admissibility conditions of Hadamard \cite[No.~7]{hadamard:1895} and Hamel \cite[\S8]{hamel:1904a}; the third lies outside them, so that Hamel's necessity claim does not hold, and the exact criterion of the residue replaces it. Within the frame calculus the premature substitution loses exactly one identifiable term: the residue of the constrained momenta relative to the chosen complement. The restricted energy itself is well defined on the constraint distribution, but outside the integrable case it does not determine the dynamics by itself.

There is no paradox. There is a classical result with known conditions, and this paper returns it, together with its geometric form, to a debate that had proceeded without it.

\backmatter

\bmhead{A note on sources and translations}

The German sources (Vierkandt, Korteweg, Neumann, Boltzmann, Voss, St\"ackel, Hamel) were read in the original. For the French and Russian sources the author has relied on English translations: on those prepared by D.~H. Delphenich and made freely available at \url{www.neo-classical-physics.info}, with the originals consulted alongside them where accessible, and, for Chaplygin's paper of 1897, on the translation in \emph{Regular and Chaotic Dynamics} \cite{chaplygin:2002}. Hadamard's memoir \cite{hadamard:1895}, reprinted in Appell's booklet \cite{appell:1899} and contained in its translation, is cited by its numbered paragraphs, which are common to the original, the reprint and the translation. The passage quoted from it is rendered from the French, and a few formulas of the translation differ from the \emph{M\'emoires} (thus $m(m+1)/2$ in No.~7 appears in the translation as $m(m-1)/2$).

\begin{appendices}

	\section{Notation}\label{app:notation}

	\begingroup
	\footnotesize
	\renewcommand{\arraystretch}{1.25}
	\begin{tabular}{@{}p{0.34\textwidth}p{0.60\textwidth}@{}}
		\hline
		$q^i$; $i,j$                                                                & configuration coordinates; coordinate indices, $1,\dots,n$                                                  \\
		$\partial_i$, $\mathrm dq^i$                                                & coordinate vector fields $\partial/\partial q^i$ and coordinate one-forms; dual bases of $T_qQ$, $T_q^{*}Q$ \\
		$m$; $f=n-m$                                                                & number of Pfaffian constraints; degrees of freedom                                                          \\
		$B_{ri},\,B_r$                                                              & constraint coefficients, \eqref{eq:pfaff}; $r=1,\dots,m$                                                    \\
		$T,\ T^{*}$                                                                 & kinetic energy; its illegitimate form \eqref{eq:Tstar}                                                      \\
		$\EL{i}$                                                                    & Euler--Lagrange operator, \eqref{eq:dalembert}                                                              \\
		$Q_i,\ \overline Q_\sigma$                                                  & applied forces, coordinate resp.\ frame components                                                          \\
		$\mathcal M_{ij}$; $D$                                                      & mass metric \eqref{eq:massmetric}; constraint distribution \eqref{eq:distribution}                          \\
		$R$                                                                         & balance one-form \eqref{eq:balance}                                                                         \\
		$e_\lambda=b^i_\lambda\partial_i$; $\theta^\lambda=a^\lambda_i\mathrm dq^i$ & frame; dual coframe ($a^\lambda_ib^i_\mu=\delta^\lambda_\mu$)                                               \\
		$\lambda,\mu,\rho,\sigma$                                                   & frame indices, $1,\dots,n$                                                                                  \\
		$\alpha,\beta,\gamma$; $s,s'$                                               & admissible frame indices, $1,\dots,f$; constrained, $f{+}1,\dots,n$                                         \\
		$\omega^\lambda,\ \delta\pi^\lambda$                                        & quasi-velocities, quasi-displacements \eqref{eq:quasi}                                                      \\
		$c^\lambda_{\sigma\rho}$                                                    & structure functions \eqref{eq:cdef} (Hamel coefficients)                                                    \\
		$\overline T,\ \overline{\mathcal M}_{\lambda\mu},\ p_\lambda$              & frame kinetic energy, metric components, momenta \eqref{eq:framemomenta}                                    \\
		$\overline T^{+}$; $P_s$                                                    & restricted energy \eqref{eq:Tplus}; constrained momenta \eqref{eq:Ps}                                       \\
		$\mathrm{Res}_\alpha$                                                       & residue of the illegitimate reduction, \eqref{eq:residue}                                                   \\
		$K_{\alpha\beta\gamma}$                                                     & coefficient of the residue, \eqref{eq:Kcoeff}                                                               \\
		$d^{\lambda}_{\sigma}=b^i_\sigma\partial_t a^\lambda_i$                     & transport coefficients of a time-dependent frame (Section~\ref{sec:geo-decomposition})                      \\
		\hline
	\end{tabular}
	\endgroup

	\section{Proof of the frame identity \eqref{eq:frameid}}\label{app:frameid}

	With $T(q,\dot q)=\overline T(q,\omega(q,\dot q))$ and $\omega^\lambda=a^\lambda_i\dot q^i$,
	\[
		\frac{\partial T}{\partial\dot q^i}=p_\lambda\,a^\lambda_i ,
		\qquad
		\frac{\partial T}{\partial q^i}
		=\Bigl(\frac{\partial\overline T}{\partial q^i}\Bigr)_{\!\omega}
		+p_\lambda\,\partial_i a^\lambda_j\,\dot q^j .
	\]

	Hence
	\[
		\EL{i}(T)
		=\dot p_\lambda\,a^\lambda_i
		+p_\lambda\bigl(\partial_j a^\lambda_i-\partial_i a^\lambda_j\bigr)\dot q^j
		-\Bigl(\frac{\partial\overline T}{\partial q^i}\Bigr)_{\!\omega} ,
	\]
	and contraction with $b^i_\sigma$, using \eqref{eq:cdef} and $\dot q^j=b^j_\rho\omega^\rho$, gives
	\[
		b^i_\sigma\,\EL{i}(T)
		=\dot p_\sigma
		+c^\lambda_{\sigma\rho}\,\omega^\rho\,p_\lambda
		-e_\sigma(\overline T). \qquad\square
	\]

	\section{Hamel's \S8 conditions in modern notation}\label{app:hamel-par8}

	Hamel \cite[\S8]{hamel:1904a} asks when the Lagrange equations of the free coordinates, formed from the illegitimate $T$, are valid, and answers: the constraints must admit, by linear recombination, a splitting into two groups. For the first group, ``the energy [is] composed of two separate terms: the part that is free of [the constrained quasi-velocities] and a part that includes only those''; for the second group, the brackets of the admissible infinitesimal transformations contain no components along it. In the extreme case in which the second group is everything, ``the first $n-\nu$ infinitesimal transformations must commute with each other,'' and the constraints are integrable. Hamel calls the conditions necessary and sufficient; they contain Hadamard's theorem \cite[Nos.~7--8]{hadamard:1895} as the coordinate-velocity case. The sufficiency is correct. The necessity rests on his equation (8), which requires the unsymmetrized coefficient $\sum_{\rho>n-\nu}\beta_{\lambda\mu\rho}\,\partial^{2}T/\partial\omega_\rho\partial\omega_\kappa$ to vanish for all $\lambda,\mu,\kappa$, whereas the substituted equations require only the part symmetric in $\mu\kappa$ to vanish. The two agree for a single constraint; Hamel's extension to several constraints by recombination does not cover the symmetric cancellation of Section~\ref{sec:geo-residue}.

	For a single constraint the necessity is elementary. With one constrained index the criterion reads
	\[
		c_{\alpha\beta}\,\overline{\mathcal M}_{\gamma}+c_{\alpha\gamma}\,\overline{\mathcal M}_{\beta}=0
		\quad\text{for all }\alpha,\beta,\gamma ,
		\qquad\text{and for }\gamma=\beta:\quad 2\,c_{\alpha\beta}\,\overline{\mathcal M}_{\beta}=0 ;
	\]
	so every direction $\beta$ with $\overline{\mathcal M}_\beta\neq0$ has $c_{\alpha\beta}=0$ for all $\alpha$; inserting one such direction into the criterion for arbitrary $\alpha,\gamma$ leaves $c_{\alpha\gamma}\overline{\mathcal M}_\beta=0$, hence $c\equiv0$. At each configuration, therefore, either the coupling or the bracket vanishes entirely. The alternative is decided pointwise; on a region where the same alternative holds throughout, one mechanism acts.

	In the notation of Section~\ref{sec:geo-residue}, Hamel's alternative corresponds to the direction-by-direction vanishing of the residue \eqref{eq:residue}: for each constrained direction $e_s$, either
	\begin{align*}
		 & \overline{\mathcal M}_{s\beta}=0
		 &                                  & (\text{metric decoupling: the energy splits}), \\
		\text{or}\qquad
		 & c^{s}_{\alpha\beta}\big|_{D}=0
		 &                                  & (\text{no Frobenius component along } e_s),
	\end{align*}
	and Hamel's group sizes $\tau$, $\nu-\tau$ count the directions of the first and the second kind. Direction-by-direction vanishing is sufficient for $\mathrm{Res}\equiv0$, and necessary for a single constraint. Hamel's claim is that it is also necessary, after admissible linear recombination, for several constraints; the cancellation examples of Appendix~\ref{app:examples-cancel} refute this, and the example with the coordinate complement and the $\eta$-metric lies within the hypotheses of his theorem. For a single recombined direction, the second mechanism is Hadamard's criterion for the corresponding combination of constraints: his conditions $P_{i,h}=0$ are, up to sign and in coordinate velocities, the components of $\sum_s\lambda_s c^{s}_{\alpha\beta}|_D$ for the combination with multipliers $\lambda_s$, and his characterization of a substitutable combination as a differential that is exact at each point on the constraint surface is the vanishing of the restricted exterior derivative \cite[Nos.~7, 12]{hadamard:1895}. Since $c^{s}_{\alpha\beta}|_D$ are the components of the Frobenius tensor of $D$, the case in which the second mechanism acts for \emph{all} constrained directions is integrability---Hadamard's ``all constraints substitutable iff integrable'' \cite[No.~8]{hadamard:1895}.

	\section{Remark on Chen's criterion}\label{app:chen}

	Chen \cite{chen:2013} analyzes the embedding within the fundamental-equation (Udwadia--Kalaba) representation \cite{udwadia:1996,udwadia:2002}. His embedded function retains all coordinates as arguments and eliminates $m$ velocities, and his embedded equations are those of the retained velocities only: variant (B1) of Section~\ref{sec:ill-operation}, with affine and rheonomic constraints admitted. His criterion (4.1) compares the dynamics generated by the partitioned mass matrix of the substituted system with the projected dynamics of the constrained one and is necessary and sufficient for the embedded equations to be correct. In the present language, for homogeneous scleronomic constraints, it is the vanishing of the entire right-hand side of \eqref{eq:decomposition}---coordinate-slot term and residue of the coordinate complement together---read in the Udwadia--Kalaba representation. It is not the statement $\mathrm{Res}=0$: the example $\dot y=y\dot x$ of Proposition~\ref{prop:leaf} has $\mathrm{Res}=0$ and fails the criterion through the coordinate-slot term alone.

	Chen confirms the nonholonomic half of the Rosenberg conjecture on the constraint $\dot y=z\dot x$; his three holonomic cases, $y-kx=0$, $xy=r(t)$ and $x^{2}+y^{2}=\rho^{2}(t)$, all pass the criterion, and his closing remark asserts failure for the nonholonomic constraint only. The claim that holonomic embedding, too, can fail is Wanichanon and Cho's reading of the paper \cite{wanichanon:2024}, which they answer by the complete embedding of Proposition~\ref{prop:leaf}(a). The cases in which an integrable constraint does fail are those of incomplete embedding, Proposition~\ref{prop:leaf}(b).

	Chen's paper also contrasts the diagnoses of \cite{rosenberg:1977} (``[an] incorrect constraint was embedded'') and \cite{udwadia:2010} (conflation of the unconstrained system with the constraint); by Sections~\ref{sec:geo-frames}--\ref{sec:geo-decomposition} both are projections of the same fact: the restriction \eqref{eq:Tplus} discards the constrained momenta that the balance one-form still requires.

	\section{Examples for the criterion}\label{app:examples}

	\subsection{The coordinate-slot term and the residue in two classical examples}\label{app:examples-pair}

	The constraint $\dot y=y\,\dot x$ on a free particle in $\mathbb R^{3}$ (Proposition~\ref{prop:leaf}) is integrable, with integral $y=Ce^{x}$. The substituted energy $T^{*}=\tfrac12m\bigl[(1+y^{2})\dot x^{2}+\dot z^{2}\bigr]$ retains $y$, and the Lagrange equation in $x$ with $y$ held fixed gives $m(1+y^{2})\ddot x+2my^{2}\dot x^{2}=0$ along the constrained motion, whereas the correct equation, from Maggi's form $\langle R,\partial_x+y\,\partial_y\rangle=0$ or from the Lagrange equation on the leaf, is $m(1+y^{2})\ddot x+my^{2}\dot x^{2}=0$. The difference is the coordinate-slot term $y\,\partial T^{*}\!/\partial y$ of Proposition~\ref{prop:decomposition}; the residue vanishes, since the adapted frame fields commute.

	For the knife edge with $s=0$ the decomposition \eqref{eq:decomposition} reproduces the discrepancy of Section~\ref{sec:ill-schneide} quantitatively: $\partial T^{*}\!/\partial y=0$, $P_y=m\tan\vartheta\,\dot x$, $c^{y}_{x\vartheta}=-\sec^{2}\!\vartheta$, so the full-elimination equations err by $m\sec^{2}\!\vartheta\tan\vartheta\,\dot\vartheta\dot x$ in the $x$-equation and by $-m\sec^{2}\!\vartheta\tan\vartheta\,\dot x^{2}$ in the $\vartheta$-equation, although the frame-level shortcut (B3) is exact there.

	For Rosenberg's fallacy \cite[Ch.~14]{rosenberg:1977}, the free particle with $\dot y=z\dot x$, the formula gives $\partial T^{*}\!/\partial y=0$, $P_y=mz\dot x$ and $c^{y}_{xz}=-1$, and returns the two errors of his shortcut: $+mz\dot x\dot z$ in the $x$-equation, which doubles the coupling term, and $-mz\dot x^{2}$ in the $z$-equation, the spurious force. Both are the residue of the coordinate complement, since $T^{*}$ is free of $y$.

	\subsection{Cancellation between constrained directions}\label{app:examples-cancel}

	The smallest example has three admissible and three constrained directions. On $\mathbb R^6$ with coordinates $(x^1,x^2,x^3,z^1,z^2,z^3)$, unit mass and Euclidean kinetic energy, impose the three catastatic Pfaffian constraints $\theta^{3+t}:=\mathrm dz^t-\tfrac12\varepsilon_{tjk}\,x^j\mathrm dx^k=0$, $t=1,2,3$ (Levi-Civita symbol $\varepsilon$): each $z^t$ grows with the area swept by the projection of the motion onto one coordinate plane. Together with $\theta^\alpha=\mathrm dx^\alpha$ they form an adapted coframe; its dual frame is $e_{3+t}=\partial_{z^t}$ and $e_\alpha=\partial_{x^\alpha}-\tfrac12\varepsilon_{t\alpha j}\,x^j\partial_{z^t}$, and \eqref{eq:cdef} gives $c^{3+t}_{\alpha\beta}=-\mathrm d\theta^{3+t}(e_\alpha,e_\beta)=\varepsilon_{t\alpha\beta}$: every constrained direction receives one bracket, the bracket map has full rank, and the coupling $\overline{\mathcal M}_{3+t,\gamma}=-\tfrac12\varepsilon_{t\gamma j}x^j$ does not vanish either. Replacing the complement by $\tilde e_{3+t}=e_{3+t}+\kappa^\beta_t e_\beta$ leaves the constraint forms and the $c^{3+t}_{\alpha\beta}$ unchanged and turns the coupling into $\overline{\mathcal M}_{3+t,\gamma}+\kappa^\beta_t\overline{\mathcal M}_{\beta\gamma}$; since $\overline{\mathcal M}_{\beta\gamma}$ is invertible, $\kappa$ can be chosen so that the new coupling is $\eta\,\delta_{t\gamma}$ with any $\eta\neq0$. Then $K_{\alpha\beta\gamma}=\eta\,\varepsilon_{\gamma\alpha\beta}$ is antisymmetric in $\beta\gamma$, so $\mathrm{Res}\equiv0$ for every state, although no constrained direction is decoupled and none is free of brackets. This is a statement about the frame form (B3) with the chosen complement.

	The cancellation does not depend on the change of complement. Keep instead the coordinate complement $e_{3+t}=\partial_{z^t}$ and replace the Euclidean energy by the frame energy $\overline T=\tfrac12\sum_\alpha(\omega^\alpha)^{2}+\eta\sum_t\omega^{t}\omega^{3+t}+\tfrac12\sum_t(\omega^{3+t})^{2}$, $0<|\eta|<1$, a positive definite form with eigenvalues $1\pm\eta$. This gives the coupling $\eta\,\delta_{t\gamma}$ from the start, the same structure functions, again $K=\eta\,\varepsilon_{\gamma\alpha\beta}$, and a restricted energy $\overline T^{+}=\tfrac12\sum_\alpha(\dot x^\alpha)^{2}$ free of every coordinate. The Lagrange equations of $\overline T^{+}$ in the retained coordinates, $\ddot x^\alpha=0$, are then the correct equations of motion, by \eqref{eq:reduced-op} with vanishing residue and vanishing admissible structure functions, or directly from Maggi's form, although neither mechanism acts for any constrained direction. This is the setting of Hamel's theorem: the coordinate complement is fixed, and the abbreviated energy is free of the eliminated coordinates.

\end{appendices}

\bibliography{paper_hamel}% gemeinsame Bibliographie mit der RCD-Fassung

\end{document}